%

\documentclass[referee,sn-basic]{sn-jnl}

\usepackage{graphicx}%
\usepackage{multirow}%
\usepackage{amsmath,amssymb,amsfonts}%
\usepackage{amsthm}%
\usepackage{mathrsfs}%
\usepackage[title]{appendix}%
\usepackage{xcolor}%
\usepackage{textcomp}%
\usepackage{manyfoot}%
\usepackage{booktabs}%
\usepackage{algorithm}%
\usepackage{algorithmicx}%
\usepackage{algpseudocode}%
\usepackage{listings}%

\usepackage{subcaption}
\usepackage{caption} 

\usepackage{enumerate} 

\usepackage{bbm} 




\theoremstyle{thmstyleone}%
\newtheorem{theorem}{Theorem}
\newtheorem{proposition}[theorem]{Proposition}%

\theoremstyle{thmstyletwo}%

\theoremstyle{thmstylethree}%

\raggedbottom

\begin{document}

\title[Article Title]{Large-Scale Correlation Screening under Dependence for Brain Functional Connectivity Network Inference}








\author*[1]{\fnm{Hanâ} \sur{Lbath} \email{hana.lbath@inria.fr}}

\author[2]{\fnm{Alexander} \sur{Petersen} \email{petersen@stat.byu.edu} }

\author[1]{\fnm{Sophie} \sur{Achard} \email{sophie.achard@univ-grenoble-alpes.fr} }

\affil*[1]{\orgname{Univ. Grenoble Alpes, CNRS, Inria, Grenoble INP, LJK}, \orgaddress{\state{38000 Grenoble}, \country{France}}}

\affil[2]{\orgdiv{Department of Statistics}, \orgname{Brigham Young University}, \orgaddress{\state{Provo, Utah, 84602}, \country{USA}}}


\abstract{Data produced by resting-state functional Magnetic Resonance Imaging are widely used to infer brain functional connectivity networks. Such networks correlate neural signals to connect brain regions, which consist in groups of dependent voxels. Previous work has focused on aggregating data across voxels within predefined regions. However, the presence of within-region correlations has noticeable impacts on inter-regional correlation detection, and thus edge identification. To alleviate them, we propose to leverage techniques from the large-scale correlation screening literature, and derive simple and practical characterizations of the mean number of correlation discoveries that flexibly incorporate intra-regional dependence structures. A connectivity network inference framework is then presented. First, inter-regional correlation distributions are estimated. Then, correlation thresholds that can be tailored to one's application are constructed for each edge. Finally, the proposed framework is implemented on synthetic and real-world datasets. This novel approach for handling arbitrary intra-regional correlation is shown to limit false positives while improving true positive rates.}

\keywords{brain functional connectivity, correlation screening, correlation threshold, network inference, rs-fMRI}

\maketitle

\section{Introduction}
	Large-scale network inference is a problem inherent to numerous fields, including gene regulatory networks, spatial data studies, and brain imaging. This work is motivated by an application to resting-state brain functional connectivity networks of single subjects. Such networks connect together correlated brain regions, which consist in groups of dependent voxels. These networks are key to providing insights into the diseased or injured brain \citep{Achard2012Hubsbrainfunctional, RichiardiABV13, malagurski2019topological}. In this paper, the terms \textit{region} and \textit{group} will be used interchangeably, the former being associated with the motivating application, and the second with other data sources of similar structure to which the proposed methods also apply. 
	
	The goal of this work is to infer a binary network where nodes correspond to regions and edges are present only between nodes that are sufficiently highly correlated. The challenge is two-fold: not only does dependence between voxels within a region impact inter-regional correlation estimation, but it also affects inter-regional correlation threshold estimation, and thus edge detection. We propose a correlation screening approach, and tackle the problem of reliable large-scale correlation discovery between two groups of arbitrarily dependent variables.

	In the context of brain functional connectivity, 
	networks are often constructed from functional Magnetic Resonance Imaging (fMRI) data by spatially aggregating blood-oxygen-level-dependent (BOLD) time series within predefined brain regions, e.g., \citep{Fallani2014}. However, this may lead to overestimation of the inter-regional correlation, or inter-correlation for brevity (e.g., \citet{1962halliwellDangeravgcorr}), and hence incorrect edge detection. We propose a novel network inference framework that leverages, for each pair of regions, 
	inter-correlation distributions instead of aggregation. To obtain the associated binary network we then present a thresholding step based on correlation screening. 
	Existing approaches typically assume variables are independent within their region. Yet, as detailed in this work, any violation of this assumption markedly impacts inter-correlation discovery, i.e., when the sample inter-correlation coefficient is greater than a given threshold.
	As will be showcased later, high intra-regional correlation, or intra-correlation for brevity, which corresponds to settings with homogeneous regions, leads to lowered true positive rates (TPR). On the other hand, low intra-correlation, a characteristic of inhomogeneous regions, leads to increased false positive rates (FPR). In \citep{hero2011}, a theoretical framework that accounts for arbitrary dependence is presented. Their approach is nevertheless very difficult to implement in practice and their empirical evaluation only covers the cases of independence or sparse dependence.
	We hence introduce simple and practical expressions to characterize the number of discoveries that flexibly incorporate dependence structures. These can then be employed to find a correlation threshold per pair of regions that improves true discovery rates under dependence, while limiting the number of false discoveries. The main steps of the proposed pipeline are depicted in Figure \ref{fig:pipeline} and are presented in Sections \ref{sec:intercorr}, \ref{sec:corrscreening} and \ref{sec:threshold}. We illustrate our work on synthetic data throughout this paper and 
	demonstrate the effectiveness of our framework on synthetic and real-world brain rat imaging datasets in Section \ref{sec:results}. 
	 \iftrue{
	\begin{figure}[ht]
		\centering
		\includegraphics[width=\linewidth]{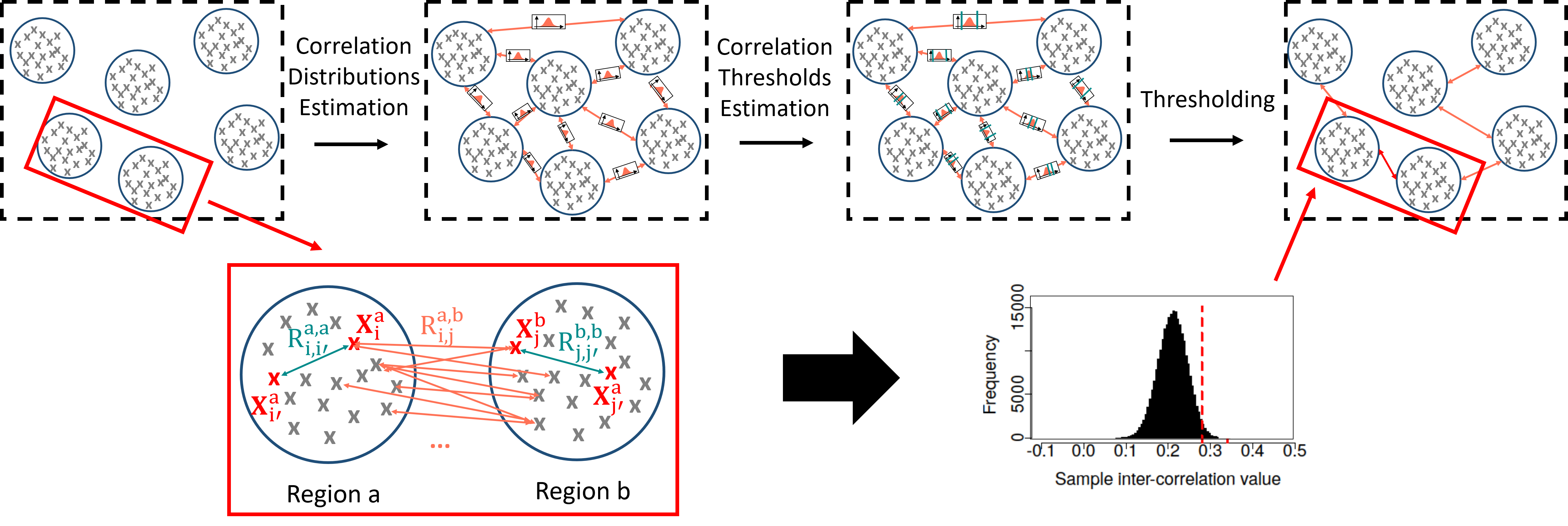}
		\caption{Main steps of our proposed network inference pipeline. Each circle corresponds to a group of variables (represented by crosses). The sample inter-correlation estimation and thresholding steps are detailed for a pair of regions. Some edges were left out to improve readability.} 
		\label{fig:pipeline}
	\end{figure}
 }\fi

	\section{Preliminaries}
	\label{sec:prelim}
	In this section, we define the data model and its parameters that will be used in the rest of this paper. 
	\subsection{Correlation coefficients}
	\label{sec:prelim:corrceof}
	Let $a$ and $b$ be indices of two regions or groups consisting of $p_a$ and $p_b$ random variables, respectively. Denote by $\mathcal{R}_a$ the set of variables in region $a$ and $X_{i}^{a}$ the $i$th random variable in $\mathcal{R}_a$. Assume $n$ independent samples of $X_{i}^{a}$ are available and define the corresponding vector $\textbf{X}_{i}^{a} = [X_{i,1}^{a}, \dots, X_{i,n}^{a}]^T$. $\mathcal{R}_b, X_j^b$ and $\textbf{X}_j^b$ are similarly defined. As an illustration, in the context of brain functional connectivity, $X_{i}^{a}$ corresponds to voxel $i$ of brain region $a$, which is associated with an fMRI BOLD signal time series with $n$ time points. We define \textit{intra-correlation} as the Pearson correlation between each pair of random variables \textit{within} a given region. \textit{Inter-correlation} is the Pearson correlation between pairs of random variables from \textit{two different} regions. 
	
	Let $\rho_{ij}^{a,b}$ denote the true population inter-correlation coefficient between $X_i^a$ and $X_j^b$. For $a \neq b,$ define the corresponding sample inter-correlation coefficient
	\begin{equation}
	    R_{i,j}^{a,b}= \frac{ \sum_{k=1}^{n} (X_{i,k}^{a} - \overline{X_{i}^{a}})(X_{j,k}^{b} - \overline{X_{j}^{b}})}{\sqrt{\sum_{k=1}^{n} (X_{i,k}^{a} - \overline{X_{i}^{a}})^2 \sum_{k=1}^{n} (X_{j,k}^{b} - \overline{X_{j}^{b}})^2} }, 
	\end{equation}
     with $\overline{X_{i}^{a}}$, $\overline{X_{j}^{b}}$ the sample means. Denote the probability density, cumulative distribution, and quantile  functions of $R_{i,j}^{a,b}$ by 	$f_{R_{i,j}^{a,b}}$, $F_{R_{i,j}^{a,b}}$, and $F_{R_{i,j}^{a,b}}^{-1}$, respectively. Population and sample intra-correlation coefficients and their distributions can analogously be defined by choosing $b=a$. Asymptotic closed-form expressions of the density of correlation can be obtained for Gaussian independently identically distributed (i.i.d.) variables $X_i^a$, $X_j^b$ \citep{muirhead2005aspects}. Note however that this work aims to tackle arbitrary dependence between variables, and in this context, to the best of our knowledge, such explicit formulas have not been derived without defining a parametric dependence structure.

In most of this paper, and for ease of calculation, we assume the joint distribution
of pairs of voxels $i, j$ from a fixed region pair $a,b$ are identically distributed. In such cases, the sample inter-correlation coefficients $R_{i,j}^{a,b}$ are identically distributed, and the
$i, j$ subscripts will be dropped, though we emphasize that independence within regions is
not assumed.

	\subsection{Synthetic data examples} 
	\label{exdatadef} 
	We illustrate the different concepts introduced in this paper with data simulated as follows. We consider two regions $a$ and $b$, both containing $p$ intra-correlated variables 
	following a multivariate normal distribution with a predefined Toeplitz covariance structure. $n$ independent samples of each of these $p$ variables are generated. We hence obtain data with a block diagonal covariance matrix of size $2p \times 2p$, where each block corresponds to each region. The off-diagonal blocks correspond to the inter-correlation coefficients, which are set to be constant across all pairs of voxels. The diagonal blocks correspond to the intra-correlation coefficients, which follow a Toeplitz dependence structure.

	\section{Inter-correlation estimation}
	\label{sec:intercorr}
	\subsection{Related work}
	\label{sec:biblio}
	Previous works on the estimation of inter-correlations have mostly focused on aggregating variables within predefined 
 regions \citep{Fallani2014, dadi2019benchmarking}. In the context of brain functional connectivity network inference, some prefer techniques based on independent component analysis (ICA) \citep{calhoun2012exploring}, while most focus on summarizing all voxels within predefined brain regions by their average, e.g., \citep{Achard2012Hubsbrainfunctional, achard.2006.1, di2014autism, malagurski2019topological}. However, such approaches suffer from loss of relevant information and can lead to statistical inconsistency and incorrect correlation estimation \citep{Ostroff19993aggregated}. In particular, the estimate of the average of weakly correlated time series, which corresponds to samples of a single variable in our data model, is poor \citep{1984wigleyAvgtimeseries}. Additionally, it has been observed on small samples that the correlation of averages is different than the average of correlations \citep{1983dunlapAvgcorrCorravg}. This phenomenon can also be easily checked with arbitrary large samples. Furthermore, correlation of averages were empirically observed to overestimate the true correlation \citep{1962halliwellDangeravgcorr, Achard2011fMRIFunctionalConnectivity}. Therefore, when correlating regional averages for binary network inference, one will tend to identify spurious edges. Some previous works attempted to improve false positive rate control utilizing multiple testing approaches \citep{Drton2007mtests}. However, in the context of arbitrary dependence structures, such methods cannot be straightforwardly applied. One alternative to aggregation is to measure the similarity, such as the  Wasserstein distance or covariance \citep{petersen2018WasCov} between intra-correlation densities. However, this approach is not equivalent to 
	that of the Pearson correlation. Indeed,  while the Wasserstein distance may provide a first intuition about how regions are connected, it does not capture as much information about the relationship between the two regions as inter-correlations do. 
	
	\subsection{On the impact of intra-correlation on inter-correlation estimation and detection}
	\label{intraoninter}
 We first illustrate how intra-correlation affects the sample inter-correlation distribution in a simplified scenario, before considering a more general case. It has been known for some time in familial data studies that intra-correlations impact 
	inter-correlation estimation \citep{rosner.1977.1, donner.1991.1}. In the multivariate normal case, and under the assumption of within-group homoscedasticity, the asymptotic variance of the maximum-likelihood estimator of the inter-correlation, 
 denoted as $R_{\mathrm{MLE}}^{a,b}$, was derived by \cite{Elston1975Cor}. This estimator showcases similar behavior to the voxel-to-voxel sample inter-correlation coefficients $R^{a,b}_{i,j}$ and will help provide us with a first intuition about the impact of intra-correlation.
	We need to assume 
 all variables $X_i^a$, $X_j^b$ have the same true inter-correlation $\rho^{a,b}$ and intra-correlation $\rho^{a,a}$ and $\rho^{b,b}$. This amounts to saying sample intra- and inter-correlation coefficients are identically distributed within their corresponding group, or pair of groups, respectively. Under these assumptions, and according to \cite{Elston1975Cor}, the variance of the maximum-likelihood estimator is: 
	\begin{multline}
		\mathrm{Var}(R_{\mathrm{MLE}}^{a,b}) = \frac{1}{n} \left[ (\rho^{a,b})^2 - \frac{1}{p_a} [1 + (p_a - 1)\rho^{a,a}]\right]
		\times \left[ (\rho^{a,b})^2 - \frac{1}{p_b} [1 + (p_b - 1)\rho^{b,b}]\right] \\
		+ \frac{(\rho^{a,b})^2}{2n} \left[\frac{p_a - 1}{p_a}(1-\rho^{a,a})^2 + \frac{p_b - 1}{p_b} (1 - \rho^{b,b})^2 \right]
		\label{varCorr}
	\end{multline}
The expression in \eqref{varCorr} shows that the variance of the sample inter-correlation coefficient explicitly depends on the true intra-correlation coefficients $\rho^{a,a}$ and $\rho^{b,b}$. When the number of samples $n$ is sufficiently large, the inter-correlation variance in the multivariate normal case hence increases when intra-correlation decreases. This observation implies that for a fixed threshold that does not depend on regional dependency structures, more false positive correlations are likely to be discovered.
	
 This intuition is illustrated in the left hand-side of Figure \ref{fig:interhist_varyintra} where the true inter-correlation is zero and no positive correlations are expected to be discovered. Conversely, for the same fixed threshold, when the true inter-correlation is positive (cf. right hand-side of Figure \ref{fig:interhist_varyintra}), increased intra-correlations, which lead to lower inter-correlation variance, may lead to decreased number of true positives. This phenomenon is observed regardless of the number of time points or variables (cf. 
 supplementary materials). 
  \iftrue{

	\begin{figure*}[t]
		\centering
		\includegraphics[width=\linewidth]{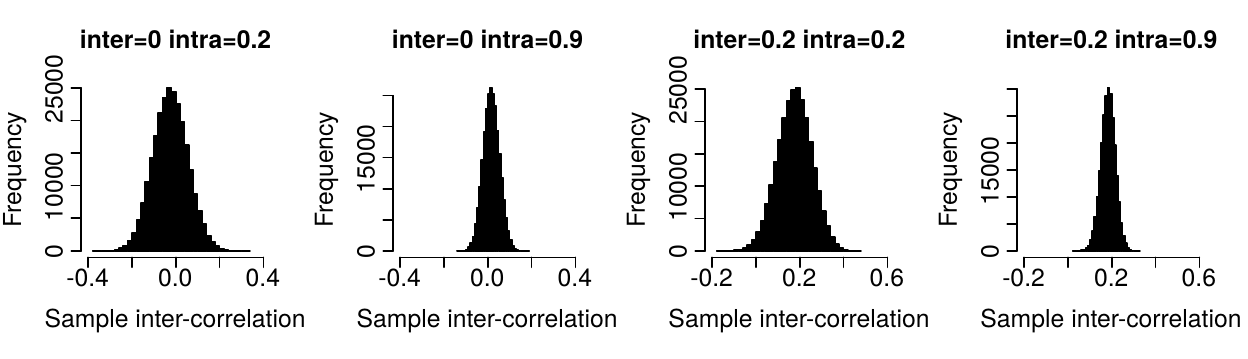} 
		\caption{Effect of intra-correlation on sample inter-correlation distribution, for different population inter-correlation values. The correlation samples were computed between all pairs of variables from two groups. Each group contains $n=150$ samples of $p=500$ intra-correlated random variables following a multivariate normal distribution with Toeplitz intra-correlation (cf. Section \ref{exdatadef}). We can note the higher the intra-correlation, the lower the variance of the inter-correlation distribution. }
		\label{fig:interhist_varyintra}
	\end{figure*}                            
}\fi

In fact, the impact that intra-correlation distributions have on the spatial average of sample inter-correlations can be quantified even without 
 any distributional assumptions. The following result shows that, when the intra-correlation densities of two regions $a$ and $b$ are highly dissimilar, as quantified by a large Wasserstein distance, the average inter-correlation is upper-bounded. In particular, this phenomenon is by no means limited to the Gaussian case. We recall here the definition of the Wasserstein distance between two correlation densities \citep{petersen2018WasCov, statswasserstein2018}:
	$d_W^2(f_{R^{a,a}}, f_{R^{b,b}}) = \int_0^1 [F_{R^{a,a}}^{-1}(c) - F_{R^{b,b}}^{-1}(c)]^2 dc$. The full proof is available in Appendix \ref{append:wassinter}.  
	\begin{proposition}
		\label{prop:wassdist}
		For any regions $a$, $b$ with $p_a, p_b$ voxels, respectively, and $\mathcal{R}_a, \mathcal{R}_b$ the corresponding voxel sets, if there exists $A \in \mathbb{R}^{+}$ such that $d_W^2(f_{R^{a,a}}, f_{R^{b,b}}) \geq \min\limits_{c\in [0,1]} \left( F_{R^{a,a}}^{-1}(c) - F_{R^{b,b}}^{-1}(c) \right)^2 \geq A$, then,
		$\overline{R^{a,b}} = \frac{1}{p_a p_b} \sum\limits_{i \in \mathcal{R}_a} \sum\limits_{j \in \mathcal{R}_b}  R^{a,b}_{i,j} \leq 1 - \frac{\sqrt{A}}{2}$.
	\end{proposition}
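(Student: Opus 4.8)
The plan is to recast the whole statement geometrically and then reduce it to a one-line optimization. For each voxel let $u_i^a \in \mathbb{R}^n$ denote the centered and $\ell_2$-normalized version of the sample vector $\mathbf{X}_i^a$, and similarly $u_j^b$; these are unit vectors, and the sample correlations become inner products, $R_{i,j}^{a,b} = \langle u_i^a, u_j^b\rangle$ and $R_{i,i'}^{a,a} = \langle u_i^a, u_{i'}^a\rangle$. I would then introduce the two centroids $\bar u^a = \tfrac{1}{p_a}\sum_{i\in\mathcal{R}_a} u_i^a$ and $\bar u^b = \tfrac{1}{p_b}\sum_{j\in\mathcal{R}_b} u_j^b$, which package the averaging in the statement.

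Expanding the inner products of these centroids gives the key identities $\langle \bar u^a, \bar u^b\rangle = \tfrac{1}{p_a p_b}\sum_{i,j} R_{i,j}^{a,b} = \overline{R^{a,b}}$ and $\|\bar u^a\|^2 = \tfrac{1}{p_a^2}\sum_{i,i'} R_{i,i'}^{a,a} = \overline{R^{a,a}}$, $\|\bar u^b\|^2 = \overline{R^{b,b}}$, where $\overline{R^{a,a}} = \int_0^1 F_{R^{a,a}}^{-1}(c)\,dc$ is the mean of the intra-correlation distribution and likewise for $b$. Since $\|\bar u^a\| \le \tfrac{1}{p_a}\sum_i \|u_i^a\| = 1$, both $\overline{R^{a,a}}$ and $\overline{R^{b,b}}$ lie in $[0,1]$. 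Cauchy--Schwarz then delivers the crux inequality $\overline{R^{a,b}} = \langle \bar u^a, \bar u^b\rangle \le \|\bar u^a\|\,\|\bar u^b\| = \sqrt{\overline{R^{a,a}}\,\overline{R^{b,b}}}$, which bounds the inter-correlation average purely in terms of the two mean intra-correlations.

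Next I would convert the hypothesis into a gap between those means. The assumption $\min_{c}\left(F_{R^{a,a}}^{-1}(c) - F_{R^{b,b}}^{-1}(c)\right)^2 \ge A$ forces the two quantile functions to stay at least $\sqrt{A}$ apart for every $c\in[0,1]$. For continuous intra-correlation densities their difference is continuous and never vanishes, hence keeps a constant sign; relabelling $a$ and $b$ if necessary, $F_{R^{a,a}}^{-1}(c) - F_{R^{b,b}}^{-1}(c) \ge \sqrt{A}$ for all $c$. Integrating in $c$ yields $\overline{R^{a,a}} - \overline{R^{b,b}} \ge \sqrt{A}$.

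Finally the statement reduces to a trivial optimization: writing $x = \overline{R^{a,a}}$ and $y = \overline{R^{b,b}}$, I must bound $\sqrt{xy}$ over $x,y \in [0,1]$ subject to $x - y \ge \sqrt{A}$. As $\sqrt{xy}$ is increasing in each argument, the maximum is attained at $x = 1$, $y = 1 - \sqrt{A}$, giving $\sqrt{xy} \le \sqrt{1 - \sqrt{A}}$; combining this with the elementary inequality $\sqrt{1-t} \le 1 - t/2$ for $t\in[0,1]$ (here $t = \sqrt{A}$) produces $\overline{R^{a,b}} \le 1 - \sqrt{A}/2$, as claimed. The conceptual heart of the argument is the centroid reformulation that turns correlations into inner products and makes Cauchy--Schwarz applicable; the only delicate points, which I expect to be the main obstacle, are the two bridges to the intra-correlation distribution --- the identity $\|\bar u^a\|^2 = \overline{R^{a,a}}$ (a bookkeeping step, modulo the treatment of diagonal pairs) and the transfer of the pointwise quantile gap to a gap of means via the no-sign-change argument.
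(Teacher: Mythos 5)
Your geometric reformulation (correlations as inner products of centered, normalized vectors) is the same starting point as the paper's U-score representation, but from there you take a genuinely different route: centroids plus Cauchy--Schwarz plus a gap-of-means reduction, whereas the paper never forms centroids and instead lower-bounds products of \emph{inter}-region distances $\|U_{v_b}-U_{v_a}\|\cdot\|U_{w_b}-U_{w_a}\|$ via triangle and reverse-triangle inequalities, then averages over all pairs. Unfortunately your route has a genuine gap, and it sits exactly at the step you dismissed as bookkeeping. The identity $\|\bar u^a\|^2=\overline{R^{a,a}}$ is false: the double sum defining $\|\bar u^a\|^2$ contains the $p_a$ diagonal terms $\langle u_i^a,u_i^a\rangle=1$, and these cannot be counted as part of the intra-correlation distribution (otherwise both quantile functions would equal $1$ near $c=1$, and the hypothesis $\min_c\bigl(F_{R^{a,a}}^{-1}(c)-F_{R^{b,b}}^{-1}(c)\bigr)^2\geq A>0$ could never hold). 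With $\overline{R^{a,a}}$ the mean over distinct pairs, the correct identity is
\begin{equation*}
\|\bar u^a\|^2=\frac{1}{p_a}+\frac{p_a-1}{p_a}\,\overline{R^{a,a}}\;\geq\;\overline{R^{a,a}},
\end{equation*}
so the diagonal inflation goes in the wrong direction for your purposes. Indeed your ``crux inequality'' $\overline{R^{a,b}}\leq\sqrt{\overline{R^{a,a}}\,\overline{R^{b,b}}}$ is simply false in general: take $p_a=p_b=2$, $u_1^a=u_1^b=u$, $u_2^a=u_2^b=v$ with $u\perp v$; then both off-diagonal intra-correlation means are $0$ while $\overline{R^{a,b}}=1/2$. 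Carrying the diagonal correction through your final optimization yields only
\begin{equation*}
\overline{R^{a,b}}\;\leq\;\sqrt{1-\tfrac{p_b-1}{p_b}\sqrt{A}}\;\leq\;1-\frac{\sqrt{A}}{2}\Bigl(1-\frac{1}{p_b}\Bigr),
\end{equation*}
which is strictly weaker than the claimed $1-\sqrt{A}/2$ for any finite region size, so the proposition as stated is not proved.

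A second, subtler problem is the tension between your two ``bridges.'' The identity step forces you to read $F_{R^{a,a}}$ as the \emph{empirical} distribution of the realized intra-correlations (otherwise $\int_0^1 F_{R^{a,a}}^{-1}(c)\,dc$ is a population mean and has no exact relation to the realized sums in the statement), but your sign argument (``continuous, never vanishing, hence constant sign'') needs continuous quantile functions. Empirical quantile functions are step functions, and a step function can jump from $+\sqrt{A}$ to $-\sqrt{A}$ without ever vanishing: for instance, intra-correlations of region $a$ taking values $0.3$ and $0.4$ with equal frequency, and of region $b$ taking values $0$ and $0.7$ with equal frequency, satisfy the pointwise gap hypothesis with $\sqrt{A}=0.3$, yet both means equal $0.35$, so integration produces no gap of means at all and your reduction collapses. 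To be fair, the paper's own argument glosses over a closely related difficulty (it treats arbitrary realized pairs as though they sat at a common quantile level $c$); but its mechanism of bounding inter-region distances directly never touches diagonal terms, which is where your argument concretely loses the stated constant.
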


	\subsection{Proposed approach: inter-correlation distribution estimation}
	As previously discussed, aggregating variables within regions to estimate inter-correlation leads to loss of information and incorrect edge detection during the binary network inference step. 
	We have also brought to light the importance of taking intra-correlation into account when manipulating inter-correlations. In addition, all the previously cited approaches that aim 
	to infer a binary network 
	where nodes are groups of variables only provide a single correlation threshold to be applied to all pairs of regions. In this paper, we propose to derive a correlation threshold specific to each pair of regions to better harness the particularities of the regional dependence structures. Instead of averaging variables 
	within regions, we hence propose to estimate the distribution of correlations measured between all pairs of variables from two different regions. We then obtain an inter-correlation distribution per pair of region, which then needs to be thresholded. To that end, we propose to leverage correlation screening.
In that paradigm, an edge is said to be detected in the associated binary graph if 
a sufficient number of sample 
inter-correlation coefficients of the corresponding pair of regions are large enough. In the following sections, we derive simplified expressions of the number of discoveries to propose a reliable method to threshold these inter-correlation distributions.
	
	\section{Characterization of the number of discoveries under dependence} 
	\label{sec:corrscreening}
	
	Correlation screening \citep{hero2011}, or independence screening \citep{2008Indscreening}, is often used in variable or feature selection problems. In such approaches, the goal is to discover sufficiently highly correlated variables. A practical method consists in defining a correlation threshold, above which correlation coefficients, and their associated variables, are said to be \textit{detected} or \textit{discovered}. Nonetheless, in high dimension, such approaches may suffer from a high number of false discoveries. In \citep{hero2011}, the authors 
	aim to mitigate this issue in the following way. They first propose the following maximum-based definition of the number of discoveries, pertaining to inter-correlation coefficients, with $\phi_{ij}^{ab}(\rho) = \mathbbm{1}(| R^{a,b}_{i,j}| > \rho)$ for all voxels $i,j$ in regions $a,b$ and correlation threshold $\rho \in [0,1]$:
	\begin{equation}
		N^{ab}(\rho) = \sum\limits_{i=1}^{p_a} \max\limits_{j =1, \dots ,p_b } \phi_{ij}^{ab}(\rho).
		\label{Nab}
	\end{equation}
	The authors provide as well an approximation of the expected number of discoveries $E[N^{ab}]$ that depends on the number of variables $p$, the number of samples $n$ and a function of the joint distribution of a transformation of the variables. Then they employ the derived formula to compute critical threshold values based on a phase transition approach. Furthermore, the expected number of discoveries 
	is used to control the number of false discoveries. It is then all the more essential to have %
	an expression of the number of discoveries that is both interpretable and can easily be theoretically and empirically utilized. 
	However, the expression for 
	$E[N^{ab}]$ derived in \citep{hero2011}, which depends on joint distributions, is difficult to compute, especially for single subject analysis where we have access to only a single sample of each signal. We hence provide simplified explicit expressions of the mean number of discoveries that still harness information contained in the intra-correlation distributions. 
	
	\subsection{Maximum-based expression: $N^{ab}$}
	Empirically, intra-correlation has an impact on $N^{ab}$ and its average (cf. Figure \ref{fig:Nabiid}, left). Indeed, for a given inter-correlation threshold, the smaller the intra-correlations, the larger the number of discoveries. This is in accordance with the observations from Figure \ref{fig:interhist_varyintra}. Additionally, we can remark that in this example the true inter-correlation is zero. Thus any discovery is a false positive.
	As the intra-correlation increases, a lower correlation threshold is then sufficient to maintain similar levels of false discoveries. 
	
  \iftrue{

	\begin{figure}[htb]
		\centering
		\begin{subfigure}{0.48\textwidth}
			\centering
			\includegraphics[trim=0cm 1cm 0cm 2cm, clip=true, width=\linewidth]{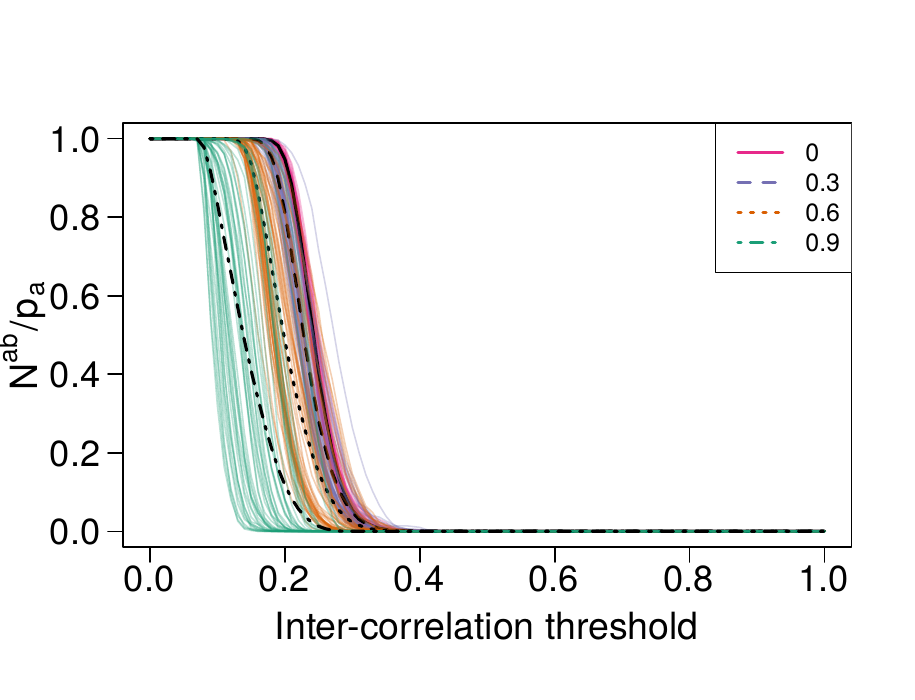}
		\end{subfigure}
		~
		\begin{subfigure}{0.48\textwidth}
			\centering
			\includegraphics[trim=0cm 1cm 0cm 2cm, clip=true,width=\linewidth]{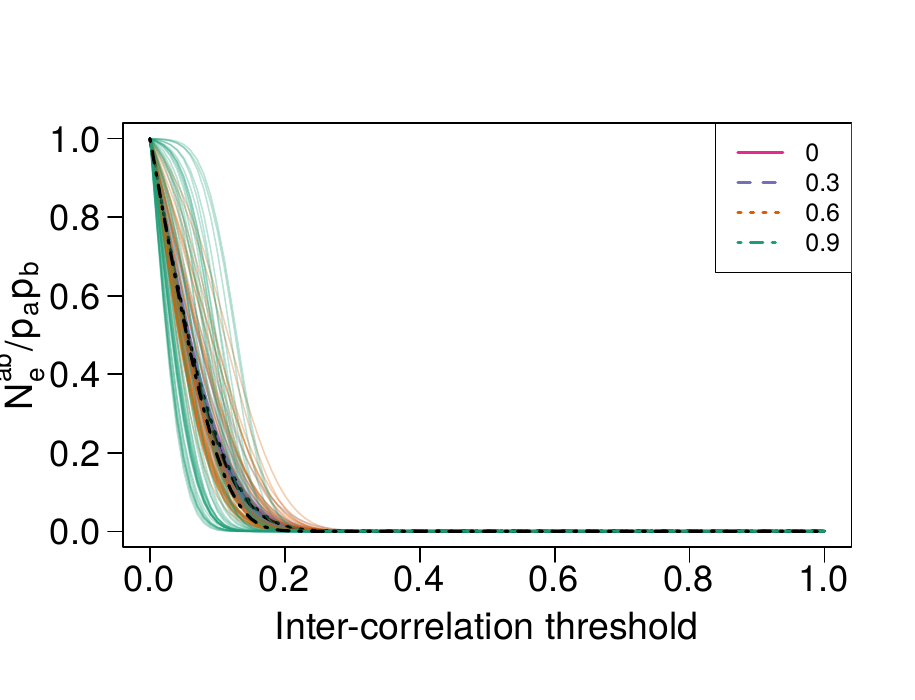}
		\end{subfigure}
		\caption{Normalized number of discoveries, $N^{ab}/p_a$ (Left) and $N_e^{ab}/p_a p_b$ (Right) as a function of the inter-correlation threshold for data simulated as described in Section \ref{exdatadef}, with $p_a = p_b = 500, n=150$, true inter-correlation $\rho^{ab}=0$ and Toeplitz intra-correlation with varying minimal intra-correlations. For each of the four intra-correlation values, $50$ datasets were simulated and used to compute the number of discoveries (the colored curves), and their average (the black dotted curves). $N^{ab}/p_a$ decreases as the intra-correlation increases, while $N_e^{ab}/p_a p_b$ does not seem to be much impacted on average, even though its variability seems to increase with the intra-correlation value.}
		\label{fig:Nabiid}
	\end{figure}
	}\fi

	From (\ref{Nab}), we can also conclude that for any $\rho \in [0,1]$,
	\begin{equation}
	     E[N^{ab}(\rho)] = \sum_{i=1}^{p_a} \left( 1 -  F_{| R_{i,1}^{a,b}|, \dots , | R_{i,p_b}^{a,b}|}(\rho,...,\rho) \right), 
	\end{equation}    
	 with $F_{| R_{i,1}^{a,b}|, \dots , | R_{i,p_b}^{a,b}|}$ the joint distribution of the absolute values of the corresponding correlation coefficients, which will inherently take into account dependence structures between the inter-correlation coefficients. However, joint distributions are complicated to estimate and manipulate. Let $\Tilde{\nu}^{ab} = E[N^{ab}]/p_a$.	We then propose an approximate expression of $\Tilde{\nu}^{ab}$, denoted by $\nu^{ab}$, that depends on the distribution of inter-correlations, and that is exact under some particular assumptions (cf. Appendix \ref{append:Nab}): 
	\begin{equation}
		\nu^{ab}(\rho) =  1 - F_{|R^{a,b}|}(\rho)^{p_b} ,  \quad \rho \in [0,1].
	\end{equation} 
	
	 We can also derive the following inequality.  
	\begin{proposition} Consider two regions $a$ and $b$ and a correlation threshold $\rho \in [0,1]$. If all variables $X_i^a$ and $X_j^b$ in both regions follow a normal distribution and their sample inter-correlation coefficients are identically distributed, then for sufficiently large $n$,
		\label{prop:ineqNab}
		\begin{equation}
			\nu^{ab}(\rho) \geq \Tilde{\nu}^{ab}(\rho)
			\label{ineqNab}
		\end{equation}
		\label{prop:Nabineq}
	\end{proposition}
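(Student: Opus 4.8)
The plan is to reduce the claim to a single positive-dependence (orthant) inequality for the sample inter-correlations that share a common variable, and then to establish that inequality via the asymptotic joint normality of sample correlations together with a Gaussian comparison inequality. First I would fix a row index $i$ and exploit the identical-marginal assumption: each $R_{i,j}^{a,b}$ has the same law as $R^{a,b}$, so $\prod_{j=1}^{p_b}P(|R_{i,j}^{a,b}|\le\rho)=F_{|R^{a,b}|}(\rho)^{p_b}$ for every $i$. Writing out $\Tilde{\nu}^{ab}=E[N^{ab}]/p_a=\frac{1}{p_a}\sum_{i=1}^{p_a}\big(1-P(\bigcap_{j}\{|R_{i,j}^{a,b}|\le\rho\})\big)$ and $\nu^{ab}=1-F_{|R^{a,b}|}(\rho)^{p_b}=\frac{1}{p_a}\sum_{i=1}^{p_a}\big(1-\prod_j P(|R_{i,j}^{a,b}|\le\rho)\big)$, the difference $\nu^{ab}-\Tilde{\nu}^{ab}$ is an average over $i$ of terms that are each nonnegative provided, for every $i$,
\begin{equation*}
P\Big(\bigcap_{j=1}^{p_b}\{|R_{i,j}^{a,b}|\le\rho\}\Big)\;\ge\;\prod_{j=1}^{p_b}P(|R_{i,j}^{a,b}|\le\rho). \tag{$\star$}
\end{equation*}
Thus everything reduces to showing $(\star)$, i.e. positive lower orthant dependence of $(|R_{i,1}^{a,b}|,\dots,|R_{i,p_b}^{a,b}|)$. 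Intuitively this must hold because all of these coefficients are built from the \emph{same} variable $X_i^a$, so small (and large) values tend to co-occur, which is exactly why the independence-based formula $\nu^{ab}$ overestimates the probability of at least one discovery.

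To prove $(\star)$ I would invoke asymptotic normality. Under joint normality of the $X$'s, the delta method applied to the sample second moments yields that $\sqrt{n}\,(R_{i,1}^{a,b}-\rho^{a,b},\dots,R_{i,p_b}^{a,b}-\rho^{a,b})$ converges to a centered multivariate normal vector with some covariance matrix $\Sigma$; this convergence is precisely what ``sufficiently large $n$'' refers to. The decisive object is the off-diagonal entry $\Sigma_{jj'}$, the limiting covariance of $R_{i,j}^{a,b}$ and $R_{i,j'}^{a,b}$, which share $X_i^a$. I would evaluate the classical formula for covariances of Gaussian sample correlations under the index identifications $\rho_{ii}^{a,a}=1$, $\rho_{ij}^{a,b}=\rho_{ij'}^{a,b}=\rho^{a,b}$ and $\rho_{jj'}^{b,b}=\rho^{b,b}$; because the shared index forces a unit correlation, most terms collapse, and the result is nonnegative — in particular it equals $\rho^{b,b}\ge 0$ when $\rho^{a,b}=0$ (the regime of our figures) and stays nonnegative in the nonnegative-correlation regime relevant to connectivity. \textbf{This sign verification is the main obstacle:} it is the one genuinely computational step, and it is what certifies that the common variable induces \emph{positive} rather than negative dependence, thereby making the Gaussian comparison usable.

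Given that $\Sigma$ has nonnegative off-diagonal entries, $(\star)$ for the limiting vector is a Gaussian positive-dependence statement, and I would close the argument with a Gaussian orthant inequality. When $\rho^{a,b}=0$ the limit is centered and the events $\{|z_j|\le\rho\}$ are symmetric slabs, so the Gaussian correlation inequality applied inductively to these symmetric convex sets delivers $(\star)$ directly. When $\rho^{a,b}\neq 0$, concentration around the nonzero mean makes one of the two constraints in each band $\{-\rho\le R_{i,j}^{a,b}\le\rho\}$ asymptotically non-binding, so $(\star)$ reduces to the one-sided orthant inequality, which follows from Slepian's inequality (valid for arbitrary means) by comparison with the independent Gaussian. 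Transferring the inequality from the Gaussian limit back to large but finite $n$ then gives $(\star)$, and the reduction of the first paragraph completes the proof. The only delicate point I would flag is the handling of the absolute values: the cleanest rigorous route is to treat the centered case through the correlation inequality and the shifted case through the half-line reduction, rather than seeking a single non-centered band inequality.
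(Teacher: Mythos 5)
Your first step---fixing $i$, using the identical-marginals assumption, and reducing the proposition to the lower-orthant inequality $(\star)$ for $\left(|R_{i,1}^{a,b}|,\dots,|R_{i,p_b}^{a,b}|\right)$---is exactly the paper's reduction, and your appeal to asymptotic joint normality of the sample correlations is also how the paper's proof begins. The genuine gap is in your second stage, specifically the non-centered branch. The sign claim it rests on is false: the limiting covariance of $R_{i,j}^{a,b}$ and $R_{i,j'}^{a,b}$ (shared variable $X_i^a$, common inter-correlation $r=\rho^{a,b}$, intra-correlation $s=\rho^{b,b}$ between $X_j^b$ and $X_{j'}^b$) is, up to the factor $1/n$,
\[
s\bigl(1-2r^2\bigr)-\tfrac{1}{2}\,r^2\bigl(1-2r^2-s^2\bigr),
\]
which equals $s\ge 0$ when $r=0$, as you note, but is negative whenever $r\neq 0$ and $s$ is small relative to $r^2$: for instance $r=0.2$, $s=0.01$ gives approximately $-0.009$, and $r=0.5$, $s=0.1$ gives approximately $-0.011$. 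A positive inter-correlation between regions whose voxels in region $b$ are weakly correlated is squarely inside what you call the ``nonnegative-correlation regime relevant to connectivity,'' so the hypothesis of Slepian's inequality fails exactly where you need it, and that branch of your proof collapses. Independently of the sign issue, the band-to-half-line reduction is only heuristic as written: $(\star)$ must survive the discarded corrections uniformly over a product of $p_b$ probabilities, which would require explicit bookkeeping.

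The deeper point, and the contrast with the paper, is that no covariance-sign verification is needed at all. The paper closes the argument with \v{S}id\'ak's theorem (Theorem 1 of the 1967 reference it cites): the absolute values of a centered jointly normal vector are positively quadrant dependent for an \emph{arbitrary} covariance matrix. So what you flag as ``the main obstacle''---certifying positive dependence via nonnegative covariances---is a red herring; PQD of absolute values holds regardless of the signs of the covariances, which is fortunate because, as shown above, those signs can be negative here. Your centered branch via the Gaussian correlation inequality applied inductively to symmetric slabs is correct, but it is a much heavier tool than \v{S}id\'ak's classical inequality, of which it is a generalization. To repair the argument, drop the covariance computation, Slepian, and the GCI, and apply \v{S}id\'ak's theorem to the asymptotic joint law of $\left(R_{i,1}^{a,b},\dots,R_{i,p_b}^{a,b}\right)$; what remains is the same finite-$n$ and centering caveat that the paper itself absorbs into the phrase ``for sufficiently large $n$.''
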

	\begin{proof}
		As defined in \citep{1966LehmanPQD}, the random variables $T_1, T_2, \dots, T_p$ are Positively Quadrant Dependent (PQD) if for any positive number $t_1, t_2, \dots, t_p$, 
		\begin{equation}
			\label{PQDdef}
			P\left(\bigcap\limits_{k=1}^{p} T_k \leq t_k \right) \geq \prod\limits_{k=1}^{p} P \left( T_k \leq t_k  \right).        
		\end{equation}
		
		Under the assumption $X_i^{a}, X_j^{b}$ are normal for all $i,j$, the distribution of their sample correlation coefficients $R_{i,j}^{a,b}$ is asymptotically normal, e.g., \citep{Ruben1966, 1953Hotelling}. Hence, according to Theorem 1 in \citep{1967SidakPQD}, when $n$ is large enough, $|R_{i,j}^{a,b}|$ are PQD. We can then note that equation (\ref{PQDdef}) is equivalent to $ E\left[\prod_{k=1}^{p}\mathbbm{1}(| T_{k}| \leq t_k) \right] \geq  \prod_{k=1}^{p} E[\mathbbm{1}(| T_{k}| \leq t_k)] $. Under the assumption the sample correlation coefficients are identically distributed, and setting $t_k=\rho$ and $T_k=R
  ^{a,b}$ for all variables, we can thus write:
		\begin{align*}
		p_a \cdot \nu^{ab}(\rho) = & \sum\limits_{i=1}^{p_a} \left( 1 - \prod_{j=1}^{p_b} E[\mathbbm{1}(| R
  ^{a,b}| \leq \rho)] \right) \\ \geq & \sum\limits_{i=1}^{p_a} \left( 1 - E\left[\prod_{j=1}^{p_b} \mathbbm{1}(| R
  ^{a,b}| \leq \rho)\right] \right) =  E[N^{ab}(\rho)] = p_a \cdot \Tilde{\nu}^{ab}(\rho).
		\end{align*}
   This concludes the proof.
	\end{proof}
	This result ensures $\nu^{ab}$ will provide thresholds that are at least as conservative as that of $\Tilde{\nu}^{ab}$. Moreover, the assumptions needed in Proposition \ref{prop:ineqNab} are often reasonable in practice---and notably in functional brain connectivity applications where the signals associated with each voxel can appropriately be transformed \citep{whitcher.2000.2}. In addition, $\nu^{ab}$ can be estimated by $\widehat{\nu}^{ab}(\rho) = 1 - \widehat{F}_{|R^{a,b}|}(\rho)^{p_b}$, where $\widehat{F}_{| R^{a,b}|}(\rho) = \frac{1}{p_b p_a} \sum_{i=1}^{p_a} \sum_{j=1}^{p_b} \mathbbm{1}(| R^{a,b}_{i,j}| \leq \rho)$ is the empirical cumulative distribution function (ecdf) of $| R^{a,b}|$.
	The result stated above can then be empirically observed in Figure \ref{fig:NabNeiid}. The curve of $\widehat{\nu}^{ab}$ as a function of thresholds is also particularly close to that of the empirical values of $\Tilde{\nu}^{ab}$ as long as both the inter-correlation and the intra-correlation of region $b$ are not too high. $\widehat{\nu}^{ab}$ provides hence an approximation for the normalized expected number of discoveries that is easier to compute, while still accounting for the inter-correlation distribution. Moreover, in practice, the ecdf of the inter-correlation coefficients can be shown to depend on the intra-correlation structure \citep{2014AzrielEcdf} and hence allows us to account for it. 
	 \iftrue{

	\begin{figure*}[!ht]
		\centering
		\includegraphics[width=\textwidth]{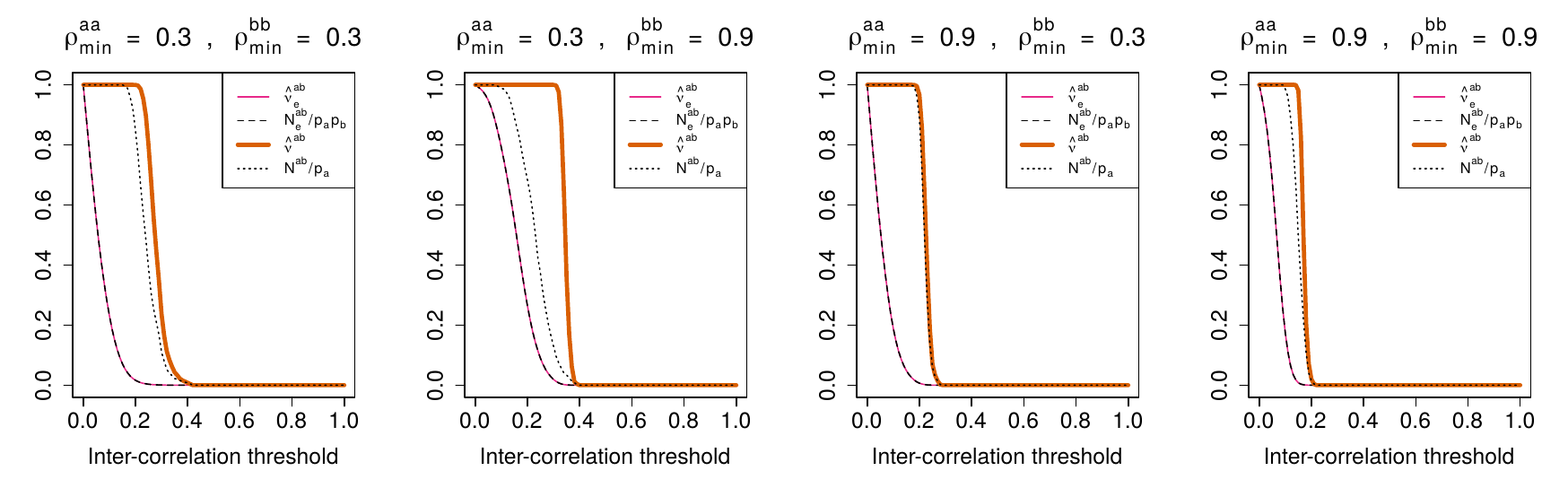}
		\caption{Normalized number of discoveries $\widehat{\nu}^{ab}$, $\widehat{\nu}_e^{ab}$, $N^{ab}/p_a$ and $N_e^{ab}/p_a p_b$ as a function of the correlation threshold for different minimal intra-correlation values $\rho_{min}^{aa}, \rho_{min}^{bb}$ and a zero inter-correlation for data simulated as described in Section \ref{exdatadef}, with $p_a = p_b = 500, n=150$. We can note $\widehat{\nu}_e^{ab} \leq \widehat{\nu}^{ab}$. We can also observe the asymmetric behavior of $N^{ab}/p_a$ and how it is close to $\widehat{\nu}^{ab}$ when $\rho^{aa}_{min} \gg \rho^{bb}_{min}$, unlike when $\rho^{bb}_{min} \gg \rho^{aa}_{min}$.}
		\label{fig:NabNeiid}
	\end{figure*}
	}\fi
	\subsection{Sum-based expression: $N_e^{ab}$}
	We now present another, and more intuitive, way to characterize the number of discoveries $N_e^{ab}$ \citep{hero2011}. It represents the total number of discoveries and will enable us to propose less conservative thresholds:
	\begin{equation}
		N_e^{ab}(\rho) = \sum_{i=1}^{p_a} \sum_{j=1}^{p_b} \phi_{ij}^{ab}(\rho), \quad \rho \in [0,1].
		\label{NeDef}
	\end{equation}
	However, it is not straightforward to derive a critical correlation threshold from this expression. We propose the simplified expression below: 
	\begin{equation}
	   \widehat{\nu}_e^{ab}(\rho) =  1 - \widehat{F}_{|R^{a,b}|}(\rho), \quad \rho \in [0,1].	
	\end{equation} 

We can also remark in Figure \ref{fig:NabNeiid} that $\widehat{\nu}_e^{ab}$ and $N_e^{ab}/p_a p_b$
 look indistinguishable.

	\subsection{Link between $N_e^{ab}$ and $N^{ab}$}
	We have presented so far two ways to characterize the number of discoveries. We will now discuss how they relate to one another. We can remark the following inequality. 
	\begin{proposition}
		For all $\rho \in [0,1]$, 
		\begin{equation}
		    \widehat{\nu}_e^{ab}(\rho) 
      \leq \widehat{\nu}^{ab}(\rho). 
		\end{equation}
		\label{prop:NeNabineq}
	\end{proposition}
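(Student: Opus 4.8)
The plan is to reduce the claimed inequality to the elementary fact that raising a number in $[0,1]$ to an integer power at least one can only decrease it. First I would unfold both sides using the definitions already given, namely $\widehat{\nu}_e^{ab}(\rho) = 1 - \widehat{F}_{|R^{a,b}|}(\rho)$ and $\widehat{\nu}^{ab}(\rho) = 1 - \widehat{F}_{|R^{a,b}|}(\rho)^{p_b}$, and abbreviate $x := \widehat{F}_{|R^{a,b}|}(\rho)$ to keep the argument transparent.

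Next I would record the two facts that make the comparison work. Since $\widehat{F}_{|R^{a,b}|}(\rho) = \frac{1}{p_a p_b}\sum_{i=1}^{p_a}\sum_{j=1}^{p_b}\mathbbm{1}(|R^{a,b}_{i,j}| \leq \rho)$ is an average of indicator functions, it is the value of an empirical cumulative distribution function and therefore satisfies $x \in [0,1]$. Moreover $p_b$ is the number of voxels in region $b$, hence a positive integer with $p_b \geq 1$. The monotonicity of the power map $t \mapsto t^{p_b}$ on $[0,1]$ then yields $x^{p_b} \leq x$.

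Finally I would subtract both sides of $x^{p_b} \leq x$ from $1$, which reverses the inequality and gives $1 - x \leq 1 - x^{p_b}$, i.e. precisely $\widehat{\nu}_e^{ab}(\rho) \leq \widehat{\nu}^{ab}(\rho)$. Because no step invokes any property of $\rho$ beyond membership in $[0,1]$, the conclusion holds pointwise for every $\rho \in [0,1]$, which is exactly the statement.

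There is no genuine analytic obstacle here: the only points requiring care are the two structural observations that $x$ lies in $[0,1]$ and that $p_b \geq 1$, after which the result is immediate. Accordingly, the "hard part" is merely to state the reduction cleanly rather than to overcome any real difficulty, and the whole argument fits in a few lines.
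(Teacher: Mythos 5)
Your proof is correct and takes essentially the same route as the paper's: both arguments reduce the claim to the fact that $\widehat{F}_{|R^{a,b}|}(\rho) \in [0,1]$ and $p_b \geq 1$ imply $\widehat{F}_{|R^{a,b}|}(\rho)^{p_b} \leq \widehat{F}_{|R^{a,b}|}(\rho)$, then subtract from $1$. Your version simply makes the two structural observations (the ecdf bound and $p_b \geq 1$) explicit, which the paper leaves implicit.
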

	The proof is straightforward and can be found in Appendix \ref{append:NeNabineq}. This result can notably be observed in Figure \ref{fig:NabNeiid}. Thus $\widehat{\nu}^{ab}$ is more conservative than $\widehat{\nu}^{ab}_e$, which in some circumstances may be desirable. Nonetheless, when the inter-correlation is zero we expect no discoveries. In this case, a critical correlation threshold can hence be defined as the minimum correlation such that the number of discoveries is zero. In such cases, using $\widehat{\nu}_e^{ab}$ then seems to be preferable, since it provides a lower correlation threshold for a similar number of false discoveries.

	\section{Correlation threshold definition} \label{sec:threshold}
	Now we have better characterized the number of discoveries, we can use it to construct correlation thresholds tailored to one's data and that ensure, to a certain extent, a restricted number of false discoveries and improved number of true discoveries. We present in this section two possible correlation threshold definition approaches. The idea behind correlation threshold definition is to ensure that
	it is very unlikely for any discovery to correspond to a correlation value that could have happened at random. 
 This amounts to a setting where the true inter-correlation is zero, which may not be true in practice. Surrogate data defined such that the population inter-correlation is zero can hence be utilized to estimate the correlation thresholds. We denote $\widehat{F}^{-1}_{0, |R^{a,b}|}$ the corresponding  quantile function.
 
	\subsection{FWER-based threshold} Correlation thresholds with family-wise error rate (FWER) theoretical control can be derived for specific dependence structures. These approaches control the probability of making at least one false discovery. Analogously to Proposition 2 in \citep{hero2011}, it can be shown, under a weak dependence condition, that $N_e^{ab}$ converges to a Poisson random variable when $p_a, p_b \to \infty$ and  $P(N_e^{ab} > 0) \to 1 - \exp\left(- E[N_e^{ab}]\right)$. Our proposed expression $\widehat{\nu}_e^{ab}$ can then be used to compute correlation thresholds $\rho_\alpha^{ab}$ that guarantee a FWER at level $\alpha$. Nevertheless, the weak dependence assumption upon which this approach hinges is often not reasonable in practice.
	
	\subsection{Quantile-based threshold} 

The correlation threshold can also be defined such that the False Positive Rate (FPR) is guaranteed to be less than a given level $\alpha$. The FPR is the ratio between the number of false positives (FP) and the total number of ground truth negatives, that is $p_a \cdot p_b$ in the $\rho^{ab}=0$ case. Controlling the FPR at level $\alpha$ is thus equivalent to ensuring the number of discoveries is less than $FP=\alpha \cdot p_a \cdot p_b$. Since in our setting ($\rho^{ab}=0$) any discovery is a false positive, we can set $\widehat{\nu}^{ab}_e \cdot p_a \cdot p_b = \alpha \cdot p_a \cdot p_b$, which leads to the threshold $\rho^{ab}_{q, \alpha} = \widehat{F}^{-1}_{0,|R^{a,b}|}(1 - \alpha) $. We can remark that when $\alpha = 0$, the chosen threshold is larger than any of the observed absolute correlations, ensuring there will be no discoveries. Additionally, this threshold will depend on the intra-correlation, as does the ecdf \citep{2014AzrielEcdf}. We can also remark this threshold guarantees a FWER at level $\alpha = 0$ under the previous weak dependence assumption. $\widehat{\nu}^{ab}$ can also be used to similarly derive a critical threshold, although stringent conditions on the region sizes would then need to be verified when $\alpha \neq 0$.

 \begin{figure}[h]
     \centering
     \includegraphics[width=.8\linewidth]{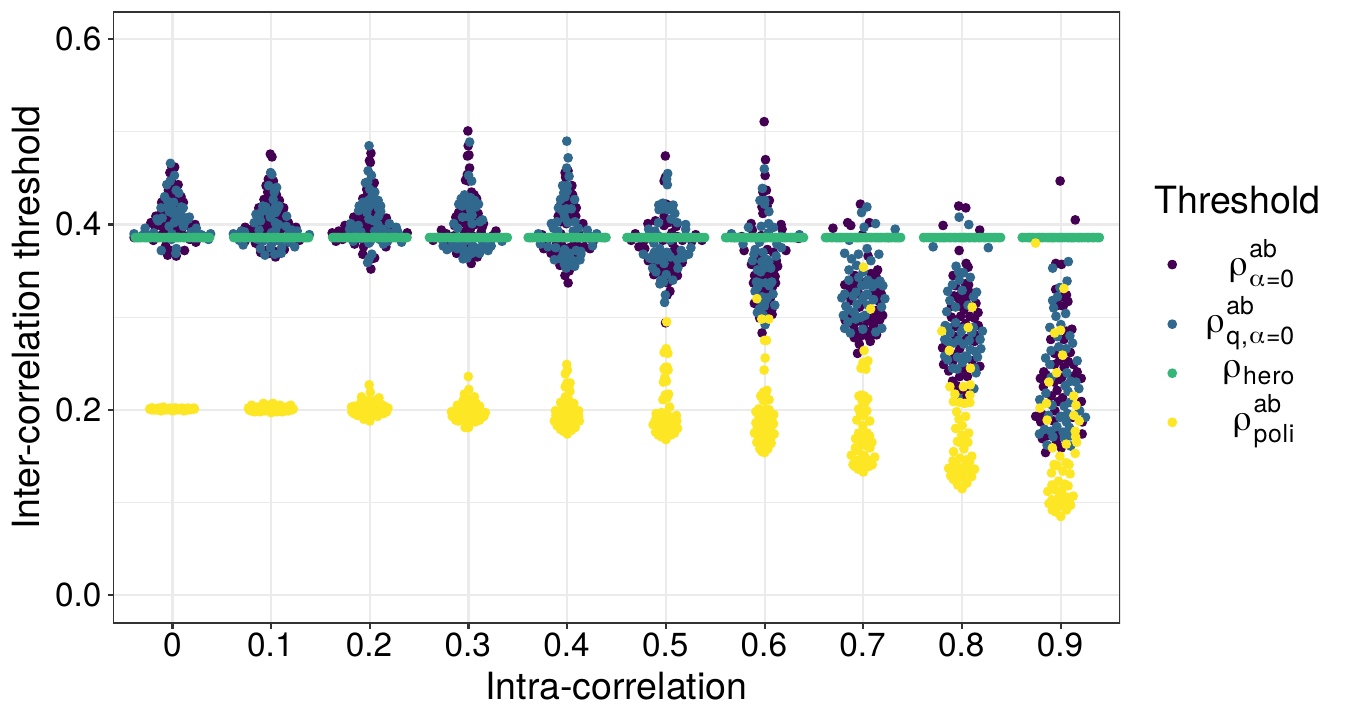}
     \caption{Comparison of different critical thresholds for 50 replicates of data simulated as described in Section \ref{exdatadef} with $p=150$, $n=100$, $\rho^{a,b}=0$ and varying constant intra-correlation values. The FWER- and quantile-based thresholds $\rho_\alpha^{ab}$ were computed for $\alpha=0$.}
     \label{fig:boxplots}
 \end{figure}
 
	\subsection{Numerical results} We compare in Figure \ref{fig:boxplots} the two correlation thresholds defined above with two other approaches: the critical thresholds $\rho_{hero}$ 
	obtained in \citep{hero2011}, and a simple method where the threshold is set to $\rho^{ab}_{poli} = \widehat{\mu}^{ab}_0 + \widehat{\sigma}^{ab}_0$ where $\widehat{\mu}^{ab}_0$ and $\widehat{\sigma}^{ab}_0$ are the sample mean and standard deviation of the sample inter-correlation of the surrogate data \citep{2015Poli}. We observe that, when intra-correlation is lower than $0.5$, both our proposed approaches provide similar thresholds to \cite{hero2011}. Nonetheless, our FWER- and quantile-based methods provide less conservative thresholds when the intra-correlation is high. The lower thresholds imply an increase in the true positive rate. We can note as well that $\rho^{ab}_{poli}$ is lower than all three other thresholds for all intra-correlation values, which could lead to a large number of false discoveries, as will be shown in the next section. It is also decreasing when intra-correlation increases in accordance with the observations about the effect of intra-correlation on the distribution of sample inter-correlation in Section \ref{sec:intercorr}.

	\section{Network inference results} 
	\label{sec:results}
	In this section we provide an illustration of our network inference approach on synthetic and real-world data and compare it to several methods.

 \subsection{Comparison to other methods}
	 As mentioned in Section \ref{sec:biblio}, most single-subject fMRI studies that use anatomical parcellations estimate the inter-correlation by computing the correlation coefficient between spatial regional averages of the signals. We will refer to the correlation of averages approach by CA and our proposed correlation screening method by CS. Various methods can then be employed to define the correlation thresholds. They usually belong to either of these two paradigms: (i) \textit{relative thresholding}, that is, estimation of the binary network by extracting a fixed proportion of edges, e.g., \citep{VANDENHEUVEL2017437}, and (ii) \textit{absolute thresholding} where one chooses, more or less arbitrarily, a fixed threshold that will be applied to all edges---as opposed to our proposed edge-specific thresholds. These two approaches are implemented in several popular packages, such as the Brain Connectivity toolbox \citep{RUBINOV20101059} and CONN \citep{conn}, with little guidance on the choice of threshold. Relative thresholds will always detect the same proportion of edges regardless of the true inter-correlation value, and as such can be disregarded in this work. Both the thresholds proposed by \cite{hero2011} and \cite{2015Poli} are absolute thresholds. The latter was used in \citep{2021Boschi_thresh} to threshold CA-based functional connectivity. In \citep{guillaume2020functional}, the authors apply to all edges of a CA-based functional connectivity network a fixed threshold $\rho_{becq}$ determined according to a multiple testing approach (see Appendix G of their work for more details). In practice, the values of $\rho_{becq}$ are close to that of $\rho_{hero}$. The thresholds $\rho^{ab}_{poli}, \rho^{ab}_{\alpha}$ and $\rho^{ab}_{q, \alpha}$ are estimated using surrogate data where the true inter-correlation is zero and the intra-correlation is constant and equal to the average sample intra-correlation. 

		
	\subsection{Synthetic data results}
We generated synthetic datasets with ten inter-connected regions. For each dataset, $10$ regions are simultaneously simulated, each region containing $p=150$ intra-correlated variables following
a multivariate normal distribution. $100$ independent samples of each of these variables are obtained. For each region, a Toeplitz intra-correlation is used with the same minimal intra-correlation value across all ten regions. There are $41$ true positive (constant true inter-correlation $\rho^{ab}=0.2$) and $4$ true negative edges (constant inter-correlation $\rho^{ab}=0$) in the ground-truth network. The different simulation parameters are chosen to ensure the population covariance matrix of the ten regions is positive semidefinite. To identify the edges of the binary network, the pairwise thresholds are applied to the corresponding distributions of the absolute value of the sample inter-correlation. In particular, pairs of regions where the inter-correlation is larger than the threshold with a probability at most $0.05$ are not identified as edges.

 \begin{sidewaystable}[ph!]
    \centering 
	\caption{Comparison of the mean (standard deviation) TPR and FPR of several network inference methods on synthetic data across $100$ repetitions. The ground-truth networks consist of $10$ regions, with $4$ real negative edges (true inter-correlation $\rho^{ab}=0$), $41$ real positive edges (true inter-correlation $\rho^{ab}=0.2$), five minimal intra-correlation values $\rho_{min}^{aa}$, $n=100$, and $p=150$. }	
 \label{table:TPFP}
\begin{tabular*}{\textheight}{@{\extracolsep\fill}lcccccccccc}
			\toprule
			& \multicolumn{2}{c}{$\rho_{min}^{aa} = 0.5$}
            & \multicolumn{2}{c}{$\rho_{min}^{aa} = 0.6$}
            & \multicolumn{2}{c}{$\rho_{min}^{aa} = 0.7$}
            & \multicolumn{2}{c}{$\rho_{min}^{aa} = 0.8$}
			& \multicolumn{2}{c}{$\rho_{min}^{aa} = 0.9$} \\
			\cmidrule(r){2-3} \cmidrule(l){4-5} \cmidrule(l){6-7} \cmidrule(l){8-9} \cmidrule(l){10-11}  
			Method & FPR & TPR & FPR & TPR & FPR & TPR & FPR & TPR & FPR & TPR  \\
			\midrule
			CA + $\rho^{ab}_{poli}$ & 0.51 (0.23) & 0.97 (0.02) & 0.46 (0.22) & 0.90 (0.07) & 0.45 (0.25) & 0.69 (0.12) & 0.43 (0.26) & 0.33 (0.14) & 0.30 (0.24) & 0.04 (0.03) \\
			CA + $\rho_{becq}$ & 0.13 (0.13) & 0.47 (0.20) & 0.04 (0.09) & 0.22 (0.13) & 0.03 (0.08) & 0.11 (0.09) & 0.01 (0.04) & 0.05 (0.06) & 0.01 (0.04) & 0.04 (0.04) \\
   			CS + $\rho^{ab}_{poli}$ & 1.00 (0.00) & 1.00 (0.00) & 1.00 (0.00) & 1.00 (0.00) & 1.00 (0.00) & 1.00 (0.00) & 1.00 (0.00) & 1.00 (0.00) & 1.00 (0.00) & 1.00 (0.00) \\
			CS + $\rho_{hero}$ & 0.03 (0.08) & 0.13 (0.12) & 0.03 (0.08) & 0.13 (0.11) & 0.04 (0.09) & 0.12 (0.09) & 0.02 (0.07) & 0.11 (0.09) & 0.01 (0.06) & 0.09 (0.07) \\
			CS + $\rho^{ab}_{\alpha=0}$ & 0.06 (0.10) & 0.23 (0.17) & 0.08 (0.12) & 0.32 (0.17) & 0.13 (0.13) & 0.42 (0.17) & 0.15 (0.16) & 0.52 (0.16) & 0.23 (0.17) & 0.68 (0.13) \\
			CS + $\rho^{ab}_{q, \alpha=0}$ & 0.06 (0.11) & 0.23 (0.18) & 0.07 (0.11) & 0.32 (0.17) & 0.14 (0.13) & 0.42 (0.16) & 0.15 (0.16) & 0.53 (0.16) & 0.25 (0.18) & 0.67 (0.13) \\
			\botrule
		\end{tabular*} 
\end{sidewaystable}

Table \ref{table:TPFP} displays the false positive and true positive rates (FPR and TPR, respectively) of the different methods, for varying minimal intra-correlation values. The FPR and TPR are defined as follows: FPR = FP/(FP+TN) and TPR = TP/(TP+FN). FN stands for false negatives (i.e., an edge is undetected when it actually exists). FPR is expected to be close to $0$ and TPR to $1$.	Results in Table \ref{table:TPFP} showcase that, as expected from the previous section, using $\rho_{poli}^{ab}$ leads to high FPRs, while $\rho_{hero}$ and $\rho_{becq}$ lead to decreasing TPRs as intra-correlation increases. Additionally, while the FPR is slightly increased, correlation screening methods with FWER- or quantile-based thresholds markedly improve the TPR when the  intra-correlation is high, and should be preferred in that case. Indeed, when intra-correlation is $0.9$ all other methods (except CS + $\rho_{poli}^{ab}$) have a TPR close to zero, while the FWER- and quantile-based thresholds have a TPR close to $0.7$. The CS+$\rho_{poli}^{ab}$ method displays a FPR of $1$ for all intra-correlation values and should thus to be avoided. Since the FWER- and quantile-based thresholds are empirically equivalent, from now on we will be using $\rho_{q, \alpha = 0}^{ab}$, which, unlike $\rho_{\alpha=0}^{ab}$, has a theoretical control over false positives that is valid for any dependence structure. 
 
	
\subsection{Real-world data results}
    We applied our framework on functional Magnetic Resonance Imaging (fMRI)
	data acquired on both dead and live rats, anesthetized using Isoflurane \citep{becq_10.1088/1741-2552/ab9fec,guillaume2020functional}.
	The datasets are freely available at \url{https://dx.doi.org/10.5281/zenodo.7254133}. The scanning duration was $30$ minutes with a time
	repetition of $0.5$ second so that $3600$ time points were acquired. After preprocessing as explained in \citep{guillaume2020functional}, based on an anatomical atlas, $51$ groups of time series, corresponding to the rat brain regions, were extracted for each rat. Due to insufficient signal, four regions were excluded. Each time series captures the functioning of a given voxel. The dead rats provide experimental data where the ground-truth network is empty. Indeed, no legitimate functional activity should be detected, whereas for the live rat under anesthetic, we expect non-empty graphs as brain activity keeps on during anesthesia. We can note that no ground-truth is available for the live rat networks. As expected, the networks of the dead rats estimated using our proposed correlation screening method are empty, i.e. it does not detect any false positive edges, with the exception of one edge in one rat. However, the CA + $\rho_{poli}^{ab}$ approach \citep{2021Boschi_thresh,2015Poli} detects over $300$ false positive edges, and \cite{guillaume2020functional} (later denoted B2020) detect between one and four false positive edges (cf. Table \ref{table:rats}). While our approach is more conservative than the other two, important edges are still detected in the live rats, mainly in motor regions (M1 and M2) and somatosensory regions (S1 and S2), as shown for instance in Figure \ref{fig:ratnetwork}.

	 \iftrue{

		\begin{table*}[!h]
		\caption{Comparison of the number of edges in the networks obtained via our proposed network inference approach and two methods from the literature (B2020 and CA + $\rho_{poli}^{ab}$) for dead (Top) and live (Bottom) rat brain fMRI data. In the dead rat brain networks, any detected edge is a false positive.}
		\label{table:rats}
		\centering
		\begin{subtable}[ht]{\textwidth}
			\centering
			\begin{tabular}{lccc}
				\toprule
				& \multicolumn{3}{c}{\textbf{NUMBER OF EDGES}} \\
				\cmidrule(r){2-4} 
				\textbf{DEAD RATS ID} 
				& \textbf{CS + $\mathbf{\rho_{q, \alpha=0}^{ab}}$} &  B2020 & CA + $\rho_{poli}^{ab}$ \\
				\midrule
				$20160524\_153000$ & $\mathbf{1}$ & $4$ & $316$ \\
				$20160609\_161917$ & $\mathbf{0}$ & $4$ & $317$ \\
				$20160610\_121044$ & $\mathbf{0}$ & $1$ & $325$ \\
				\botrule
		\vspace{.1cm}
			\end{tabular}
		\end{subtable}
		\begin{subtable}[hb]{\textwidth}
		    \centering
			\begin{tabular}{lccc}
				\toprule
				& \multicolumn{3}{c}{\textbf{NUMBER OF EDGES}} \\
				\cmidrule(r){2-4} 
				\textbf{LIVE RATS ID} 
				&  $\mathbf{\text{\textbf{CS}} + \rho_{q, \alpha=0}^{ab}}$ & B2020 & CA + $\rho_{poli}^{ab}$ \\
				\midrule
				$20160615\_103000$ & $\mathbf{25}$ & $647$ & $820$ \\
				$20160614\_095825$ & $\mathbf{411}$ & $847$ & $910$ \\
				$20160615\_121820$ & $\mathbf{116}$ & $477$ & $692$ \\
				$20160421\_133725$ & $\mathbf{83}$ & $591$ & $910$ \\
				\botrule
			\end{tabular}
		\end{subtable}
	\end{table*}
	
 \begin{figure}[h]
        \centering
        \begin{subfigure}[b]{.3\textwidth}
            \centering
            \includegraphics[trim={1cm 4cm 1cm 3.5cm},clip, width = .9\textwidth]{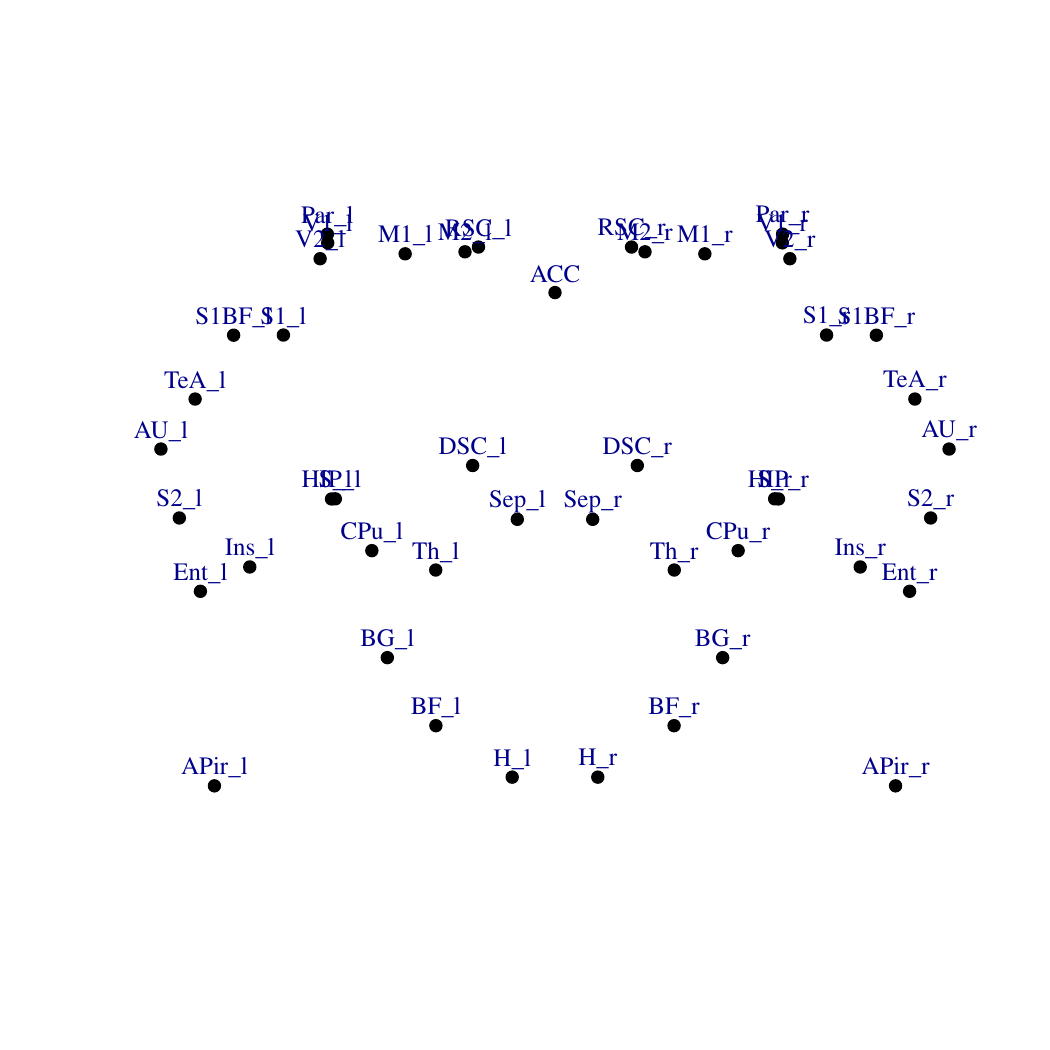}
            \caption{Dead rat $20160609\_161917$}
        \end{subfigure}
        ~
        \begin{subfigure}[b]{.3\textwidth}
            \centering
            \includegraphics[trim={1cm 4cm 1cm 3.5cm},clip, width = .9\textwidth]{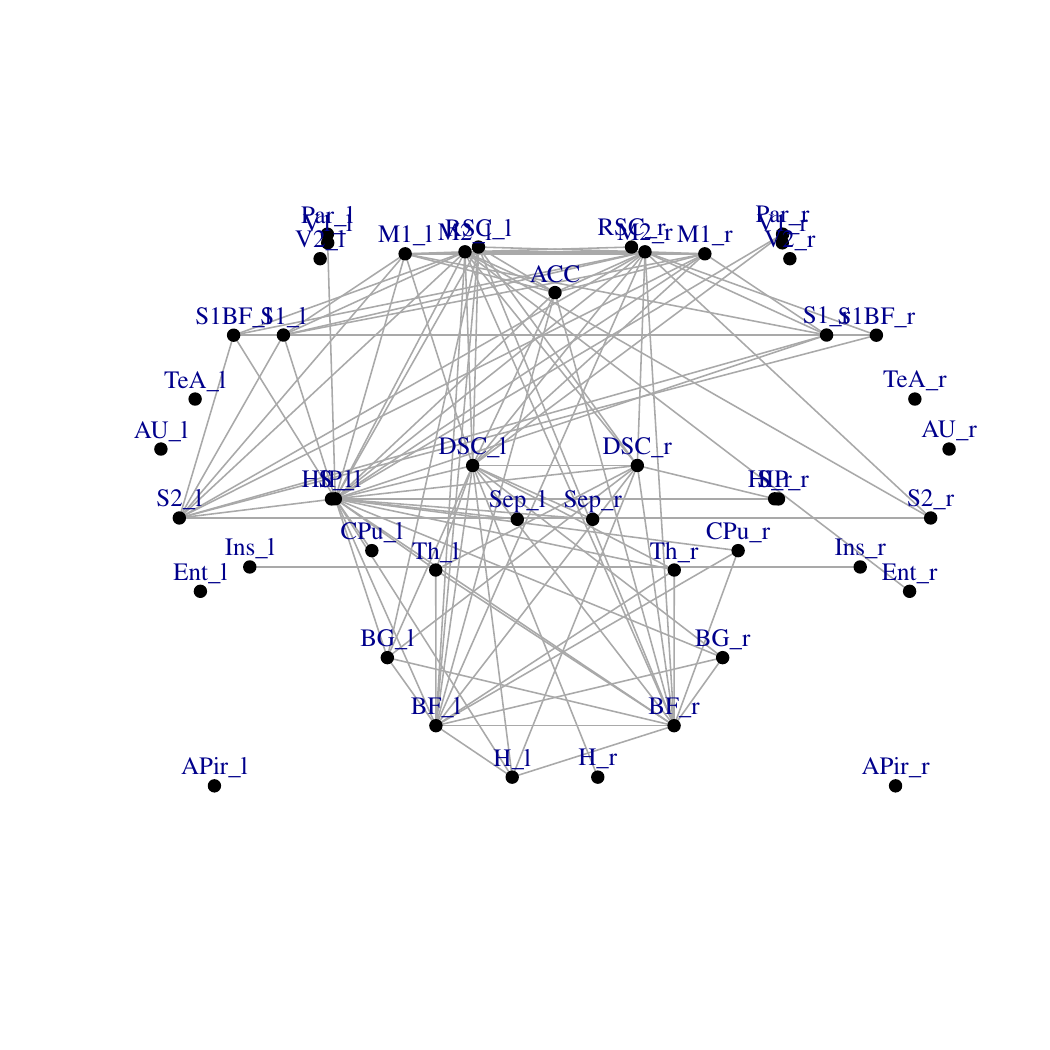}
            \caption{Live rat $20160615\_121820$ }
        \end{subfigure}
        \begin{subfigure}[b]{.3\textwidth}
            \centering
            \includegraphics[trim={1cm 4cm 1cm 3.5cm},clip, width = .9\textwidth]{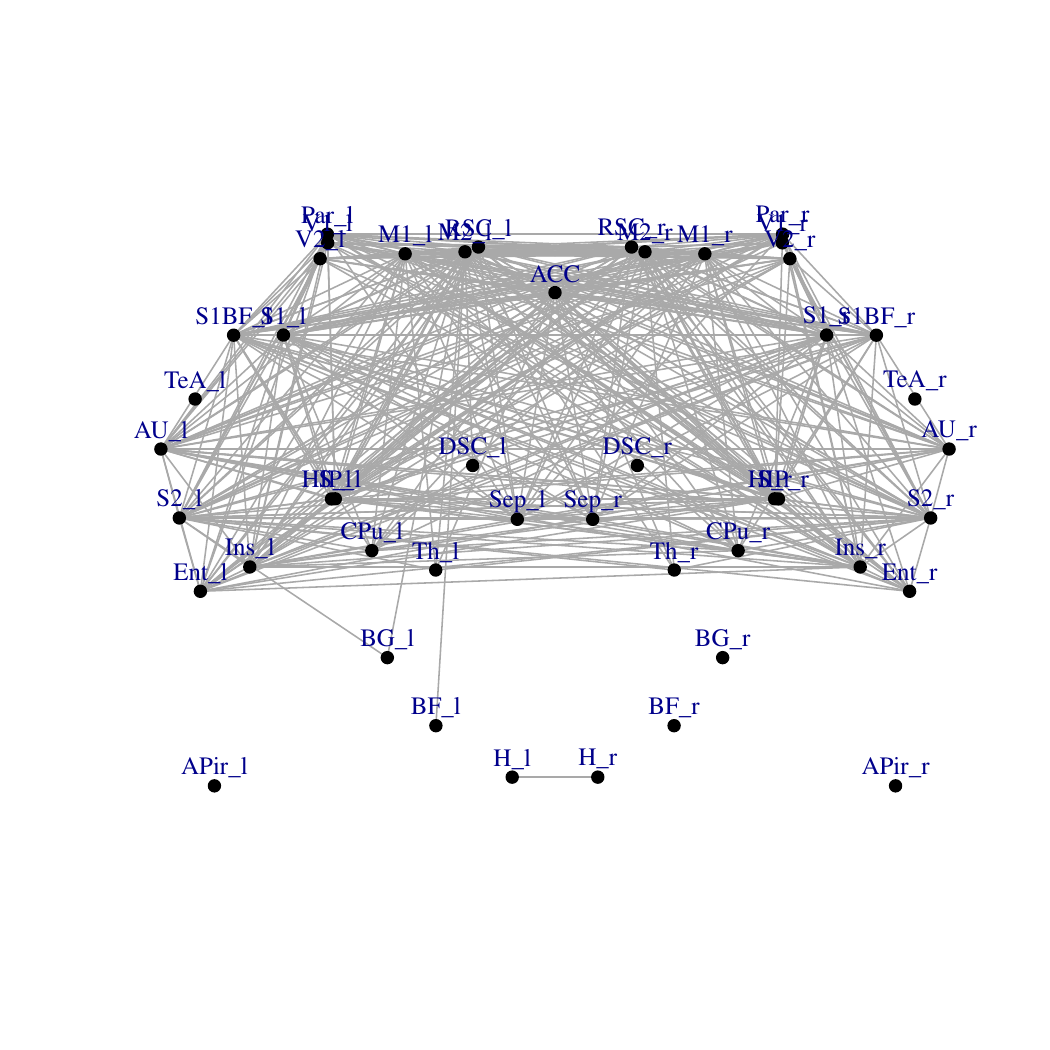}
            \caption{Live rat $20160614\_095825$}
        \end{subfigure}
        \caption{Brain functional connectivity network of a dead and two live rats (anesthetized with Isofluorane) inferred using our proposed correlation screening framework with the quantile-based threshold (CS + $\rho_{q,\alpha=0}^{a,b}$).}
        \label{fig:ratnetwork}
    \end{figure}
	}\fi

		\section{Discussion}
	\label{sec:limits}
	We have presented a novel approach to infer connectivity networks when nodes represent groups of correlated variables. We have formally established the importance of leveraging dependence structures to reliably discover inter-correlations. Our method consists in estimating, for each pair of groups, an inter-correlation distribution before deriving a tailored 
	threshold based on a correlation screening approach. In particular, we proposed simplified expressions for the mean number of discoveries that allow for easier theoretical and empirical manipulation, and flexibly take into account dependence within groups. Motivated by a real-world application, we have demonstrated the feasibility of our approach on a real dataset of rat brain images. 

 This work has several possible theoretical extensions.
First, while we provide a method with theoretical FPR control for any setting, we provide theoretical FWER control guarantees only under a weak dependence assumption. Nevertheless, this assumption is often unrealistic and relaxing it is difficult and would be an interesting direction to explore. Additionally, FWER approaches may sometimes be too conservative. On the other hand, the false discovery rate (FDR) enables to control the average number of FPs, which is often sufficient. A procedure to define correlation thresholds was proposed in \citep{Cai2016FDR} that leverages a quantity linked to the sum-based mean number of discoveries $E[N_e^{ab}]$. While they provide FDR control, it is only valid under some particular dependence conditions. It would nonetheless be interesting to extend their work to arbitrary dependence.
	
	In this paper, the aim was to reliably detect one edge at a time. It would then be interesting to build upon the proposed edge-centric correlation thresholds to develop a multiple testing framework so as to provide theoretical control over the estimation of the connectivity of all pairs of region, perhaps by leveraging existing bootstrapping techniques \citep{Cai2016FDR}.

	Finally, it would be valuable to provide practitioners with a way to quantify edge detection uncertainty. For instance, confidence intervals for each edge of the whole inferred network could be defined. Some work has been done to determine confidence intervals for correlation coefficients in the bivariate case, both for underlying normality, e.g., \citep{Ruben1966, muirhead2005aspects}, and unknown distributions \citep{Hu2020intervalCorr}. It would be worth exploring how these methods could build upon our approach to extend them to a more general case in order to account for dependence. 

\bmhead{Supplementary information} Proofs of the propositions are available in the appendix. Additional discussions and details can be found in the supplementary materials. Source code, including a notebook detailing how to reproduce the figures of this paper, is also
available at: \url{https://gitlab.inria.fr/q-func/csinference}.
    
    \bmhead{Acknowledgments}
    This work was supported by the project Q-FunC from Agence Nationale de
la Recherche under grant number ANR-20-NEUC-0003-02 and the National Science Foundation 
under grant IIS-2135859.
    
 \bibliography{biblio_paper.bib}

\begin{appendices}
    
\section{Proof of Proposition \ref{prop:wassdist}}
\label{append:wassinter}

    \subsection{U-scores} 
	\label{sec:prelim:uscores}
	
	Before proving Proposition \ref{prop:wassdist}, we need to introduce U-scores. \textit{U-scores} are an orthogonal projection of the Z-scores of random variables. They are confined to an $(n-2)$-sphere centered around 0 and with radius 1, denoted $S_{n-2}$, with $n$ the number of samples. We refer to \citep{hero2011} for a full definition.	U-scores namely provide a practical expression of the correlation coefficient as an inner product of U-scores: $R_{i,j}^{ab} = U_i^a{}^T U_j^b = 1 -  \| U_i^a - U_j^b \|^2 /2$, where  $U_i^a$, $U_j^b$ are the random variables of the U-scores of voxels $i$ and $j$ in regions $a$ and $b$, respectively, and $\|.\|^2$ is the squared Euclidean distance. Consequently, when U-scores are close to one another on $S_{n-2}$, they are associated with a high correlation.

		\subsection{Intuition behind Proposition \ref{prop:wassdist}.} Roughly speaking, Proposition \ref{prop:wassdist} means that if the Wasserstein distance between the densities of sample intra-correlation coefficients is sufficiently large, which means that the two distributions are highly different, then the average Euclidean distance between U-scores from the two regions is large too. Hence the average of inter-correlations is quite low. This phenomenon is illustrated in Figure \ref{fig:highdist} where two regions with different intra-correlation densities are depicted when $n=3$.
	
	\begin{figure}[!h]
		\centering
\includegraphics[width=.25\textwidth]{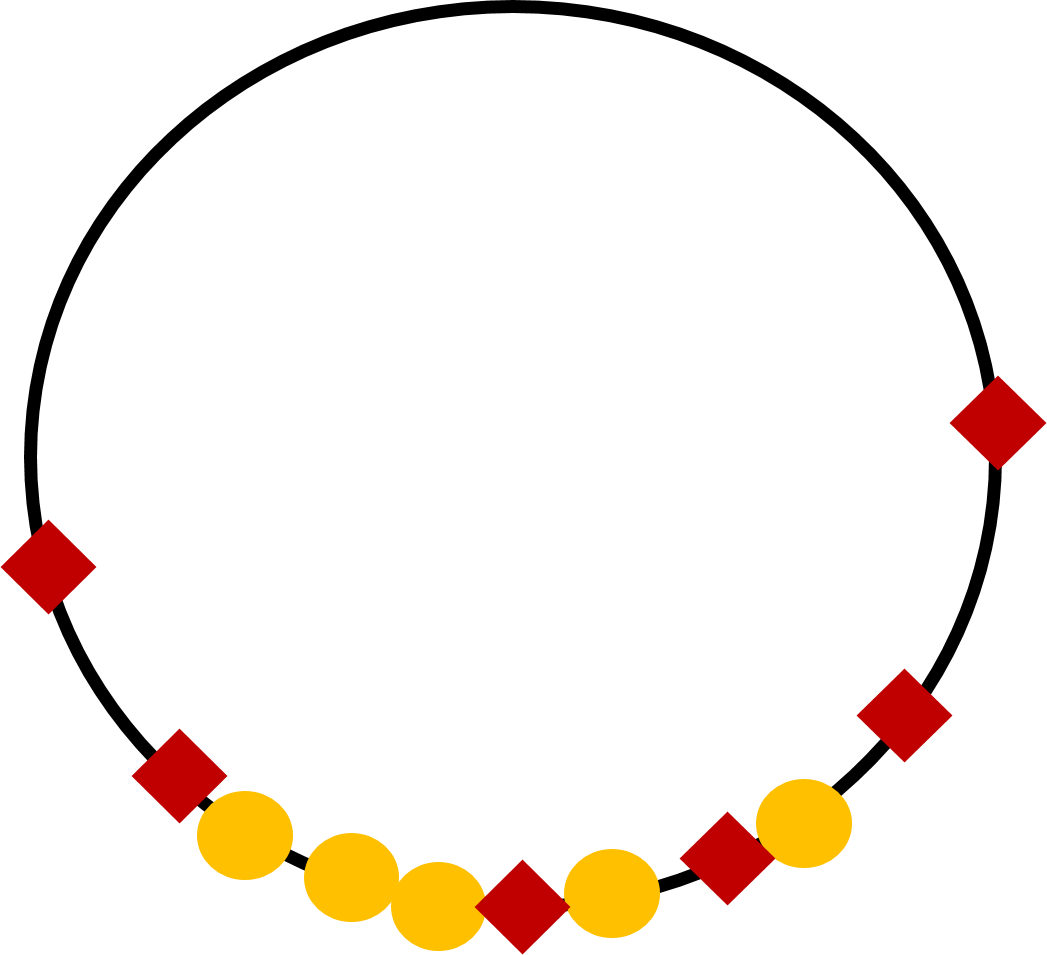}
		\caption{$S_{n-2}$ with $n=3$ and U-scores from two regions (red diamonds and orange discs) that have a high intra-correlation density Wasserstein distance. Recalling that a high Euclidean distance between U-scores implies a low correlation, we can intuitively observe that the average inter-correlation is upper-bounded.}
		\label{fig:highdist}
  \end{figure}
	\subsection{Proof of Proposition \ref{prop:wassdist}}
	Let us first remark:
		\begin{equation}
			\inf_{c\in [0,1]} \left( F_{R^{a,a}}^{-1}(c) - F_{R^{b,b}}^{-1}(c) \right)^2 \leq d_W^2(f_{R^{a,a}}, f_{R^{b,b}}) \leq \sup_{c \in [0,1]} \left( F_{R^{a,a}}^{-1}(c) - F_{R^{b,b}}^{-1}(c) \right)^2 
		\end{equation}
		
		and as $ \left( F_{R^{a,a}}^{-1} - F_{R^{b,b}}^{-1} \right)^2 $ is continuous on $[0,1]$, it attains its supremum and infimum. 
		
		Additionally, we can notice that for each $c \in [0,1]$, there exist two points $U,V \in S_{n-2}$ such that $F_{R^{a,a}}^{-1}(c) = 1 -  \| U - V \|^2 /2$, and similarly for region $b$. Moreover, for all $x,y \in \mathcal{R}_a$, there exists $c \in [0,1]$ such that, with their corresponding U-scores denoted $U_x,U_y$ (which are in $S_{n-2}$), $R^{a,a}_{x,y} = 1 -  \| U_x - U_y \|^2 /2 = F_{R^{a,a}}^{-1}(c) $, and analogously for region $b$. 
		
		Therefore, under the assumption $\min\limits_{c\in [0,1]} \left( F_{R^{a,a}}^{-1}(c) - F_{R^{b,b}}^{-1}(c) \right)^2  \geq A$, there exist $U_{x_a}, U_{y_a}, U_{x_b},U_{y_b} \in S_{n-2}$ such that 
		\begin{equation*}
			\min\limits_{c\in [0,1]} \left( F_{R^{a,a}}^{-1}(c) - F_{R^{b,b}}^{-1}(c) \right)^2 
			= \frac{1}{4}\left( \| U_{x_b} - U_{y_b} \|^2 - \| U_{x_a} - U_{y_a} \|^2 \right)^2,
		\end{equation*}
		and it follows for all $v_a,w_a \in \mathcal{R}_a$, $v_b,w_b \in \mathcal{R}_b$,
		\begin{equation*}
			A \leq \frac{1}{4} \left( \| U_{x_b} - U_{y_b} \|^2 - \| U_{x_a} - U_{y_a} \|^2 \right)^2 \leq \frac{1}{4} \left( \| U_{v_b} - U_{w_b} \|^2 - \| U_{v_a} - U_{w_a} \|^2 \right)^2.
		\end{equation*}

		Thus, expanding the term on the right and applying the triangle inequality, followed by the reverse triangle inequality,
		\begin{equation*}
		    \begin{split}
			2\sqrt{A} & \leq \left( \| U_{v_b} - U_{w_b} \| + \| U_{v_a} - U_{w_a} \| \right) \cdot  \Big\lvert \| U_{v_b} - U_{w_b} \| - \| U_{v_a} - U_{w_a} \| \Big\lvert \\
			& \leq \left( \| U_{v_b} - U_{v_a} \| + \| U_{v_a} - U_{w_b} \| + \| U_{v_a} - U_{w_b} \| + \| U_{w_b} - U_{w_a} \| \right)  \cdot \| U_{v_b} - U_{w_b} - (U_{v_a} - U_{w_a}) \| \\
			& \leq \left( \| U_{v_b} - U_{v_a} \| + \| U_{v_a} - U_{w_b} \| + \| U_{v_a} - U_{w_b} \| + \| U_{w_b} - U_{w_a} \| \right)  \cdot   \left( \| U_{v_b} - U_{v_a} \| + \|U_{w_b} - U_{w_a}\| \right) \\
			& \leq ({\| U_{v_b} - U_{v_a} \|^2} + {\| U_{v_b} - U_{v_a} \| \cdot\| U_{w_b} - U_{v_a} \|} + {\| U_{w_b} - U_{v_a} \| \cdot \|U_{w_b} - U_{w_a}\| }\\
			& + {\| U_{v_b} - U_{v_a} \| \cdot \| U_{w_b} - U_{w_a} \|} )\\
			& + ({\|U_{w_b} - U_{w_a}\|^2} + {\| U_{w_b} - U_{v_a} \| \cdot \| U_{v_b} - U_{v_a} \|} + { \|U_{w_b} - U_{w_a}\| \cdot \| U_{w_b} - U_{v_a} \|} \\
			& + {\| U_{v_b} - U_{v_a} \| \cdot \| U_{w_b} - U_{w_a} \|}).
			\end{split}
		\end{equation*}
		We can then notice
		\begin{alignat*}{2}
			\overline{\| U^{b} - U^{a} \|}^2 & = \left( \frac{1}{p_a p_b} \sum_{v_a \in \mathcal{R}_a} \sum_{v_b \in \mathcal{R}_b}  \| U_{v_b} - U_{v_a} \| \right) ^2 \\
			& = \frac{1}{(p_a p_b)^2} {\sum_{h_a \in \mathcal{R}_a} \sum_{h_b \in \mathcal{R}_b} \| U_{h_b} - U_{h_a} \|^2 }+ \\
			& \frac{1}{(p_a p_b)^2} {\sum_{h_a \in \mathcal{R}_a} \sum_{h_b \in \mathcal{R}_b} \sum_{k_b \in \mathcal{R}_b-\{h_b\}}  \| U_{h_b} - U_{h_a} \|\cdot  \| U_{k_b} - U_{h_a} \|} + \\
			& \frac{1}{(p_a p_b)^2} {\sum_{h_a \in \mathcal{R}_a} \sum_{h_b \in \mathcal{R}_b} \sum_{k_a \in \mathcal{R}_a-\{h_a\}} \| U_{h_b} - U_{h_a} \|\cdot  \| U_{h_b} - U_{k_a} \|} + \\
			& \frac{1}{(p_a p_b)^2} {\sum_{h_a \in \mathcal{R}_a} \sum_{h_b \in \mathcal{R}_b} \sum_{k_a \in \mathcal{R}_a-\{h_a\}} \sum_{k_b \in \mathcal{R}_b-\{h_b\}}  \| U_{h_b} - U_{h_a} \|\cdot  \| U_{k_b} - U_{k_a} \|}.
		\end{alignat*}
		Thus $\overline{\| U^{b} - U^{a} \|}^2 \geq \frac{1}{(p_a p_b)^2} \cdot \frac{(p_a p_b)^2}{2} \cdot 2\sqrt{A} = \sqrt{A}$. From the Cauchy-Schwarz inequality,
		$$ \overline{R^{a,b}} \leq 1 - \frac{ \overline{\| U^{b} - U^{a} \|}^2}{2},$$
		which completes the proof.

	\section{Proof of Proposition \ref{prop:NeNabineq}}
	\label{append:NeNabineq}
	
	Let us recall Proposition \ref{prop:NeNabineq}.
	\newtheorem*{Pineq}{Proposition~\ref{prop:NeNabineq}}
	\begin{Pineq}
			For all $\rho \in [0,1]$, 
		\begin{equation}
		    \widehat{\nu}_e^{ab}(\rho) 
      \leq \widehat{\nu}^{ab}(\rho). 
      \end{equation}
	\end{Pineq}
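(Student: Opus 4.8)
The plan is to reduce the claimed inequality to the elementary fact that $x^{p} \le x$ whenever $x \in [0,1]$ and $p$ is a positive integer. First I would recall the two quantities being compared: by definition $\widehat{\nu}_e^{ab}(\rho) = 1 - \widehat{F}_{|R^{a,b}|}(\rho)$ and $\widehat{\nu}^{ab}(\rho) = 1 - \widehat{F}_{|R^{a,b}|}(\rho)^{p_b}$, where $\widehat{F}_{|R^{a,b}|}(\rho) = \frac{1}{p_a p_b} \sum_{i=1}^{p_a} \sum_{j=1}^{p_b} \mathbbm{1}(|R^{a,b}_{i,j}| \le \rho)$ is the empirical cumulative distribution function of the absolute sample inter-correlations evaluated at $\rho$. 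The key structural observation is that both estimators are built from the single scalar $\widehat{F}_{|R^{a,b}|}(\rho)$, so the comparison collapses to a one-variable inequality.

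Next I would record that $\widehat{F}_{|R^{a,b}|}(\rho) \in [0,1]$: being an average of indicator functions, it is a proportion and hence bounded between $0$ and $1$. Writing $x = \widehat{F}_{|R^{a,b}|}(\rho)$, the desired inequality $\widehat{\nu}_e^{ab}(\rho) \le \widehat{\nu}^{ab}(\rho)$ becomes, after cancelling the leading $1$ and reversing under negation, exactly $x^{p_b} \le x$.

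Finally I would invoke the monotonicity of powers on the unit interval: since $x \in [0,1]$ and $p_b \ge 1$ is a positive integer, we have $x^{p_b} \le x^{1} = x$. Substituting back gives $1 - x^{p_b} \ge 1 - x$, that is $\widehat{\nu}^{ab}(\rho) \ge \widehat{\nu}_e^{ab}(\rho)$, as required. There is effectively no obstacle here—the statement is elementary once one notes that the empirical cdf is valued in $[0,1]$; the only point worth flagging is the boundary case $p_b = 1$, for which the two estimators coincide and the bound holds with equality, consistent with the stated inequality.
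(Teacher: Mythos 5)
Your proof is correct and follows essentially the same route as the paper's: both reduce the claim to the elementary fact that $\widehat{F}_{|R^{a,b}|}(\rho) \in [0,1]$ implies $\widehat{F}_{|R^{a,b}|}(\rho)^{p_b} \leq \widehat{F}_{|R^{a,b}|}(\rho)$, and then negate and add one. Your additional remark on the equality case $p_b = 1$ is a harmless refinement but does not change the argument.
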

	\begin{proof}
		Since for all $\rho \in [0,1]$, $0 \leq \widehat{F}_{|R^{a,b}|}(\rho) \leq 1$, then $\widehat{F}_{|R^{a,b}|}(\rho)^{p_b} \leq \widehat{F}_{|R^{a,b}|}(\rho)$. 
        Thus, $\widehat{\nu}^{ab}(\rho) = 1- \widehat{F}_{|R^{a,b}|}(\rho)^{p_b} \geq 1 - \widehat{F}_{|R^{a,b}|}(\rho) = \widehat{\nu}_e^{ab} $
	\end{proof}
	
	\section{Additional insights on $\nu^{\MakeLowercase{ab}}$}
	\label{append:Nab}
	We can derive the following proposition.
	\begin{proposition}
		\label{prop:Nab}
		If, for a fixed $i = 1, \dots,p_a$, all sample inter-correlation coefficients $R_{i,j}^{a,b}$ are i.i.d., $\nu^{ab} = \Tilde{\nu}^{ab}$. 
	\end{proposition}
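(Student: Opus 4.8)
The plan is to start from the maximum-based definition \eqref{Nab} and rewrite the inner maximum of indicators as a complementary product, which converts the expectation of the maximum into a joint probability. Concretely, for a fixed row index $i$, the quantity $\max_{j=1,\dots,p_b} \phi_{ij}^{ab}(\rho)$ equals $1$ precisely when at least one of the events $\{|R_{i,j}^{a,b}| > \rho\}$ occurs, so I would first record the identity $\max_{j} \phi_{ij}^{ab}(\rho) = 1 - \prod_{j=1}^{p_b} \mathbbm{1}(|R_{i,j}^{a,b}| \leq \rho)$. Taking expectations and using $E[\mathbbm{1}(\cdot)] = P(\cdot)$ then gives $E\!\left[\max_{j} \phi_{ij}^{ab}(\rho)\right] = 1 - P\!\left(\bigcap_{j=1}^{p_b} \{|R_{i,j}^{a,b}| \leq \rho\}\right)$, which is exactly the $i$th summand appearing in the expression for $E[N^{ab}(\rho)]$ already stated in the excerpt.

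The second step is to invoke the hypothesis. Since, for the fixed $i$, the coefficients $R_{i,1}^{a,b}, \dots, R_{i,p_b}^{a,b}$ are assumed independent, the joint probability factorizes into a product of marginals, $P\!\left(\bigcap_{j} \{|R_{i,j}^{a,b}| \leq \rho\}\right) = \prod_{j=1}^{p_b} F_{|R_{i,j}^{a,b}|}(\rho)$. Because they are also identically distributed---each with common law $F_{|R^{a,b}|}$, in line with the standing identical-distribution convention of Section \ref{sec:prelim:corrceof}---every factor equals $F_{|R^{a,b}|}(\rho)$, so the product collapses to $F_{|R^{a,b}|}(\rho)^{p_b}$. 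Substituting back yields $E\!\left[\max_{j} \phi_{ij}^{ab}(\rho)\right] = 1 - F_{|R^{a,b}|}(\rho)^{p_b} = \nu^{ab}(\rho)$, independently of $i$.

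The final step is simply to sum over $i = 1, \dots, p_a$ and normalize: since each of the $p_a$ row-contributions equals $\nu^{ab}(\rho)$, we get $E[N^{ab}(\rho)] = p_a\,\nu^{ab}(\rho)$, and dividing by $p_a$ gives $\Tilde{\nu}^{ab}(\rho) = E[N^{ab}(\rho)]/p_a = \nu^{ab}(\rho)$, as claimed. I do not anticipate a genuine obstacle here, as the argument is an exact computation rather than an approximation; the only point requiring care is the bookkeeping around the two distinct uses of the hypothesis---independence across $j$ is what permits the factorization of the joint cdf, while the identical-distribution property is what lets the resulting product be written as a single cdf raised to the power $p_b$. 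This is precisely why $\nu^{ab}$ is merely an \emph{approximation} of $\Tilde{\nu}^{ab}$ in general: dropping independence leaves the genuine joint distribution in place (as Proposition \ref{prop:ineqNab} quantifies via the PQD inequality), and equality is recovered exactly when that joint distribution factorizes.
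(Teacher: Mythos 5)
Your proposal is correct and follows essentially the same route as the paper's own proof: the same identity $\max_j \phi_{ij}^{ab} = 1 - \prod_{j=1}^{p_b}\bigl(1-\phi_{ij}^{ab}\bigr)$, independence across $j$ to factorize the expectation (equivalently, the joint probability), and the identical-distribution convention to collapse the product into $F_{|R^{a,b}|}(\rho)^{p_b}$ before summing over $i$. No gaps; your closing remark correctly identifies why independence is exactly the condition under which the approximation $\nu^{ab}$ becomes exact.
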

	
	\begin{proof}
		We can first remark $\max\limits_{j \in \mathcal{R}_b} \phi_{ij}^{ab} = 1 - \prod\limits_{j=1}^{p_b} (1-\phi_{ij}^{ab})$.
		Thus, 
		\begin{align*}
			E[N^{ab}] & = \sum\limits_{i=1}^{p_a} E[1 - \prod\limits_{j=1}^{p_b} (1-\phi_{ij}^{ab}) ] \\
			& = \sum\limits_{i=1}^{p_a} ( 1 - \prod_{j=1}^{p_b} E[(1-\phi_{ij}^{ab})] ) \text{ under the assumption of independence.}
		\end{align*}
		For a fixed $i = 1, \dots,p_a$, under the assumption all $| R^{
		a,b}_{i,j}|$ are identically distributed, then, for all $j,l = 1, \dots,p_b$ , $F_{| R^{a,b}_{i,j}|} = F_{| R^{a,b}_{i,l}|}$. Denote, $F_{| R^{a,b}|}$ the distribution function such that $F_{| R^{a,b}_{i,j}|} = F_{| R^{a,b}|}$ for all $i = 1, \dots, p_a $, $j = 1, \dots, p_b$. Thus,
		$ p_a \cdot \Tilde{\nu}^{ab} = E[N^{ab}] = p_a \cdot \left[1 - F_{|R^{a,b}|}^{p_b} \right] = p_a \cdot  \nu^{ab}. $
		
	\end{proof}

	\section{Further information about our implementation}
	Our implementation is based on R 4.2.3. All experiments were performed on a laptop running on Ubuntu 18.04 with eight 1.8GHz 64-bits Intel Core i7-10610U CPUs, 32 GB of memory and a 1 TB hard drive.   

\end{appendices}


\end{document}